\documentclass[journal]{IEEEtran}
\usepackage[utf8]{inputenc}

\usepackage{times}
\usepackage{amssymb}  
\usepackage{cite}     
\usepackage{textgreek}
\usepackage{upref}
\usepackage{amsfonts}
\usepackage{verbatim}
\usepackage[dvipsnames,usenames]{color}
\usepackage{cite}
\usepackage[inline]{enumitem}
\usepackage{caption}
\usepackage{balance}
\usepackage{nicefrac}
\usepackage{amsthm}
\usepackage{algorithm}
\usepackage{times,color}
\usepackage{enumitem}
\usepackage{float}
\usepackage{color}
\usepackage{mathtools}
\usepackage{graphicx}
\usepackage{listings}
\usepackage{tikz}
\usepackage{multicol,multirow}
\usepackage{psfrag}
\usepackage{bbm}
\usepackage{verbatim}
\usepackage{soul}
\usepackage{varwidth}

\usepackage{pgfplots}
\usepackage{hhline}
\usepackage{xcolor,colortbl}
\usepackage{algorithm}
\usepackage[noend]{algpseudocode}

\usepackage{enumitem}

\usetikzlibrary{%
	arrows,%
	automata,%
	calc,%
	patterns,%
	plotmarks,%
	positioning,%
	shapes.geometric,%
	shapes,%
	snakes,%
	decorations.pathmorphing,%
	decorations.shapes,%
	graphs,%
	chains,%
	arrows.meta, %
}


\algnewcommand{\Initialize}[1]{%
	\State \textbf{Initialize:}
	\Statex \hspace*{\algorithmicindent}\parbox[t]{.8\linewidth}{\raggedright #1}
}

\parindent   0.15in

\pretolerance=50 \tolerance=100 \hyphenpenalty=1

\clubpenalty=300 \widowpenalty=300 \displaywidowpenalty=100

\hbadness = 10000 \hfuzz = 3.5pt \vbadness 10000

\flushbottom

\newcommand{\del}{\delta}
\newcommand{\ins}{\iota}




\newcommand{\be}[1]{\begin{equation}\label{#1}}
\newcommand{\ee}{\end{equation}}

\newcommand{\bc}{\begin{center}}
\newcommand{\ec}{\end{center}}



\newcommand{\cC}{{\cal C}}
\newcommand{\cD}{{\cal D}}

\newcommand{\cI}{{\cal I}}

\newcommand{\cO}{{\cal O}}


\newcommand{\x}{\mathbf{x}}
\newcommand{\y}{\mathbf{y}}


\renewcommand{\leq}{\leqslant}

\renewcommand{\geq}{\geqslant}







\newcommand{\Cref}[1]{Co\-rol\-la\-ry\,\ref{#1}}


%
%
%
%
%
%
%
%
%
%
%
%

\newtheorem{theorem}{Theorem}

\newtheorem{definition}{Definition}

\newtheorem{remark}{Remark}

\newtheorem{construction}{Construction}




\definecolor{Codecolor}{named}{White}  


\newcommand{\Copen}{\mbox{\{\kern-5.50pt\{}}
\newcommand{\Cclose}{\mbox{\}\kern-5.50pt\}}}
\newcommand{\Cslash}{\mbox{$\backslash\kern-6.02pt\backslash$}}










\newcommand{\sepb}{{\color{green!50!black}0}}
\newcommand{\sepr}{{\color{red}1}}

\def \th{^\text{th}}

\newif\iflong
\longtrue

\begin{document}

\title{Detecting Deletions and Insertions in Concatenated Strings with Optimal Redundancy}%
\author{%
\IEEEauthorblockN{Serge Kas Hanna and  Rawad Bitar \\ Institute for Communications Engineering, Technical University of Munich, Germany\\ Emails: \{serge.k.hanna, rawad.bitar\}@tum.de
}
	\thanks{This project has received funding from the European Research Council (ERC) under the European Union's Horizon 2020 research and innovation programme (grant agreement No. 801434) and from the Technical University of Munich - Institute for Advanced Studies, funded by the German Excellence Initiative
and European Union Seventh Framework Programme under Grant Agreement
No. 291763.}\vspace{-4ex}
}

\maketitle

\begin{abstract}
        We study codes that can detect the exact number of deletions and insertions in concatenated binary strings. We construct optimal codes for the case of deletions. We prove the optimality of these codes by deriving a converse result which shows that the redundancy of our codes is asymptotically optimal in the number of deletions among all families of deletion detecting codes, and particularly optimal among block-by-block decodable codes. For the case of insertions, we construct codes that can detect up to $2$ insertions in each concatenated binary string.
\end{abstract}

\section{Introduction}\label{sec:intro}

We consider the problem of detecting deletions and insertions in concatenated binary strings. More precisely, we consider strings of the form \mbox{$\x=\langle \x^1, \x^2, \cdots, \x^{n/\ell} \rangle \in \mathbb{F}_2^n$}, that are implicitly divided into $n/\ell$ disjoint substrings  $\x^1, \x^2, \cdots, \x^{n/\ell}$, referred to as blocks, each being of size~$\ell$. Suppose that $\x$ is affected by worst-case deletions and insertions resulting in a string $\y$. We are interested in constructing codes that can detect the exact number of deletions and insertions that have occurred in each of the blocks $\x^1, \x^2, \cdots, \x^{n/\ell}$, based on the observed string $\y$. Such codes can consequently determine the boundaries of each block in~$\y$. Furthermore, these codes enable localizing deletions and insertions and identifying their presence in certain parts of an information string. Thus, they have several potential applications such as coding for segments edit channels~\cite{A18,L10}, marker-based constructions for coded trace reconstruction~\cite{J20}, file synchronization~\cite{R15}; in addition to detecting mutations in parts of a DNA sequence, and retaining synchronization in sequential transmission.

Most of the literature has focused on the correction of deletions and insertions, under the assumption that the codeword boundaries are known at the decoder. Levenshtein~\cite{L66}  derived fundamental limits which show that the optimal number of redundant bits needed to correct $\delta$ worst-case deletions in a binary string of length $n$ is $\Theta(\del\log(n/\del))$. 
Levenshtein also showed that the code constructed by Varshamov and Tenengolts (VT codes)~\cite{VT65} is capable of correcting a single deletion and has asymptotically optimal redundancy in~$n$. There have been lots of works in the past few years on the classical problem of constructing codes that correct deletions and insertions~\cite{B16,GC,H19,Chen18,SimaIT,SimaSYS}. The state-of-the-art results in \cite{Chen18,SimaIT,SimaSYS} give codes with $\mathcal{O}(\delta \log n)$ redundancy. Some of the recent works also studied the case where the deletions and insertions occur in bursts or are localized within certain parts of the codeword~\cite{Sch17,GClocalized,A20}. 

As previously mentioned, the aforementioned works in the literature have a common requirement: the codeword boundaries must be known at the decoder in order to successfully correct the errors. In fact, one can easily show that if multiple codewords of the single deletion correcting VT code are concatenated and transmitted over a channel that deletes at most one bit in each codeword, then the decoder cannot determine the boundaries of these codewords with certainty. In other words, the decoder cannot detect the number of deletions that have occurred in each codeword. Therefore, the problem that we study in this paper cannot be solved by concatenating codewords that belong to classical deletion and insertion correcting codes.

Codes for detecting deletions and insertions have been previously studied in~\cite{K03,P13} under a different definition than the one we use in this paper. The definition used in~\cite{P13} for a deletion detecting code is as follows. A code $\cC\subseteq \mathbb{F}_2^n$ is said to be a $\del$-deletion detecting code in~\cite{P13}, if for any $\x\in \cC$, the process of deleting any $\del'\leq \del$ bits from $\x$ and then appending arbitrary $\del'$ bits at the end, does not produce a codeword in~$\cC$. The authors in~\cite{K03} consider a similar definition and focus on non-binary codes. The main difference between our definition and the definitions in~\cite{K03,P13} is that we require the decoder to detect the exact number of deletions as opposed to only identifying whether deletions have occurred in a block or not. Namely, our (informal) definition is the following. We say that a code $\cC$ detects up to $\del$ deletions in each of the blocks of \mbox{$\x=\langle \x^1, \x^2, \cdots, \x^{n/\ell} \rangle \in \cC$}, if and only if there exists a decoder that can determine the exact number of deletions that have occurred in each block after $\x$ is affected by worst-case deletions. The difference between the two previous definitions is crucial in the setting of concatenated strings which we consider in this paper. Namely, in the presence of multiple concatenated blocks, detecting the exact number of deletions in a given block allows the decoder to determine the boundary of that block, and consequently proceed to decode the next block.

Our main contributions and the organization of this paper are summarized as follows. A formal definition of the problem is given in Section~\ref{sec:problem}. In Section~\ref{sec:detecting_del}, we construct an explicit code that detects up to $\del$ deletions in each block of a codeword $\x$. The code is encodable and decodable in linear time $\cO(n)$, and its redundancy is $(2\del+1)(n/\ell-1)$ bits, where $\ell$ is the length of each block in the codeword. Then, we derive lower bounds on the redundancy of codes that detect deletions. These bounds show that the redundancy of our codes is optimal among all block-by-block decodable codes (Definition~\ref{def3}), and asymptotically optimal in $\delta$ among all codes that detect up to $\del$ deletions. In Section~\ref{sec:detecting_ins}, we present two code constructions that detect up to $1$ and up to $2$ insertions per block. We conclude the paper with some open problems in Section~\ref{sec:conc}.

\section{Problem Statement and Notation}\label{sec:problem}
We start by introducing some of the notations used throughout the paper.  Let $[n]$ be the set of integers from $1$ to $n$~(inclusive), and let $[i,j]$ be the set of integers from $i$ to $j$ (inclusive). Let $\mathbf{1}^i$ and $\mathbf{0}^j$ denote strings of $i$ consecutive ones and $j$ consecutive zeros, respectively. For a string \mbox{$\mathbf{x}=(x_1,x_2,\ldots,x_n)\in \mathbb{F}_2^n$}, we use $x_i$, \mbox{$i=1,2,\ldots,n$}, to refer to the $i\th$ bit of $\mathbf{x}$. We write $\mathbf{x}_{[i,j]}=(x_i,x_{i+1},\ldots,x_j)$ as the substring of $\mathbf{x}$ which contains the consecutive bits ranging from index $i$ to index $j$. 

\begin{definition}
\label{def1}
For a given $\ell \in \mathbb{Z}^+$ with $\ell<n$, the $j\th$ block of $\mathbf{x}\in \mathbb{F}_2^n$ is defined as the substring $$\mathbf{x}^j\triangleq \mathbf{x}_{[1+(j-1)\ell, j\ell]}=(x_{1+(j-1)\ell},x_{2+(j-1)\ell},\ldots,x_{j\ell}),$$
where $j \in [n/\ell]$.
\end{definition}

In Definition~\ref{def1}, and throughout the paper, we assume that $\ell$ divides~$n$. We use $\langle \mathbf{a},\mathbf{b}\rangle$ to refer to the concatenation of two strings $\mathbf{a}$ and~$\mathbf{b}$. In this paper, we study the problem of constructing codes for detecting deletions and insertions in concatenated strings. We focus on the case where these strings are binary.
\iflong
All logarithms in this paper are of base $2$.
\else
\fi
\begin{definition}
\label{def2}
Let $\mathcal{C}_{\tau}(\ell,n)\subseteq \mathbb{F}_2^n$, with $\tau< \ell \leq n/2$, be a code of length $n$ that contains codewords of the form $$\mathbf{x}=\langle \mathbf{x}^1,\mathbf{x}^2,\ldots,\mathbf{x}^{n/\ell}\rangle,$$ where $\mathbf{x}^j \in \mathbb{F}_2^{\ell}$ for all $j\in [n/\ell]$. Suppose that $\mathbf{x}\in \mathcal{C}_{\tau}(\ell,n)$ is affected by at most $\tau=\del+\ins$ errors in each block, resulting in a string \mbox{$\mathbf{y}\in \mathbb{F}_2^*$}, where $\del$ and $\ins$ denote the number of deletions and insertions, respectively. The code $\mathcal{C}_{\tau}(\ell,n)$ is said to detect up to $\tau$ errors per block, if and only if there exists a decoding function $$\mathrm{Dec}(\mathbf{y}) : \mathbb{F}_2^* \to \mathbb{Z}_{\tau+1}^{n/\ell} \times \mathbb{Z}_{\tau+1}^{n/\ell},$$ that outputs the exact numbers of deletions and insertions that have occurred in each block $\mathbf{x}^j$, for all $\mathbf{x}$ and $\mathbf{y}$.
\end{definition}

For example, a code $\cC_1(\ell,n)$ is said to detect up to $\tau=1$ error per block, if and only if there exists a decoding function $\mathrm{Dec}(\mathbf{y}) : \mathbb{F}_2^* \to \mathbb{Z}_{2}^{n/\ell} \times \mathbb{Z}_{2}^{n/\ell}$ whose output for a given block $j\in [n/\ell]$ is: \begin{enumerate*}[label={\textit{(\roman*)}}] \item $(0,0)$ if $\x^j$ was not affected by any error; \item $(1,0)$ if $\x^j$ was affected by exactly $1$ deletion; \item $(0,1)$ if $\x^j$ was affected by exactly $1$ insertion. \end{enumerate*} Note that the decoding requirement in Definition~\ref{def2} of detecting the {\em exact} numbers of deletions and insertions per block, is equivalent to detecting the boundaries of each block in the observed string~$\mathbf{y}$. Furthermore, we use $\cC_{\del}(\ell,n)$ to refer to codes that can only detect up to $\del$ deletions per block, i.e., $\tau=\del$ and $\ins=0$. Similarly, we use $\cC_{\ins}(\ell,n)$ to refer to codes that can only detect up to $\ins$ insertions per block.

\begin{definition}
\label{def3}
Let $\cC(\ell,n)$ be a code that follows Definition~\ref{def2}. Consider a codeword $\x \in \cC(\ell,n)$ that is affected by at most $\tau$ errors in each of its blocks, resulting in $\y$. Let $\alpha_j$ be the starting position of block $j$ in $\y$, with $\alpha_1=1$. The code $\cC(\ell,n)$ is said to be block-by-block decodable, if and only if there exists a decoder that can output the exact numbers of deletions and insertions in any block $\x^j$ by only processing the bits in~$\y_{[\alpha_j,\alpha_j+\ell'-1]}$, where $\ell'$ is the maximum length of the block in $\y$ given by: \begin{enumerate*}[label={\textit{(\roman*)}}] \item $\ell'=\ell+\tau$ for a code $\mathcal{C}_{\tau}(\ell,n)$ that detects up to $\tau$ errors; \item $\ell'=\ell$ for a code $\cC_{\del}(\ell,n)$ that detects up to $\del$ deletions; \item $\ell'=\ell+\ins$ for a code $\cC_{\ins}(\ell,n)$ that detects up to $\ins$ insertions.
\end{enumerate*}
\end{definition}

\section{Detecting Deletions}\label{sec:detecting_del}
In this section, we present an explicit code $\cD_{\del}(\ell,n)$ that detects up to $\del$ deletions in each of the concatenated blocks of a codeword $\x \in \cD_{\del}(\ell,n)$. We also derive converse results for this problem, which give a lower bound on the redundancy of codes $\cC_{\del}(\ell,n)$ that detect to up to $\del$ deletions per block.

\subsection{Results}
\iflong
In Theorem~\ref{thm1}, we state our result on the explicit code $\cD_{\del}(\ell,n)$ which we construct in Section~\ref{del:cons}. In Theorem~\ref{thm2}, we give a lower bound on the redundancy of any code $\cC_{\del}(\ell,n)$ that detects to up to $\del$ deletions per block. In Theorem~\ref{thm3}, we specifically consider block-by-block decdodable codes (Definition~\ref{def3}), and give a lower bound on the redundancy of such codes. The proofs of these theorems are given in the subsequent sections.
\else
Our results on codes detecting deletions are summarized in the following three theorems.
\fi
\begin{theorem}
\label{thm1}
For $\del,\ell,n\in \mathbb{Z}^+$, with $2\del< \ell \leq n/2$, let $$\mathbf{x}=\langle \mathbf{x}^1,\mathbf{x}^2,\ldots,\mathbf{x}^{n/\ell}\rangle \in \cD_{\del}(\ell,n).$$ Suppose that $\mathbf{x}$ is affected by at most $\del$ deletions in each of its blocks $\mathbf{x}^1,\ldots,\mathbf{x}^{n/\ell}$. The code $\cD_{\del}(\ell,n)$ given in Construction~\ref{cons1} detects up to $\del$ deletions per block. This code is encodable and block-by-block decodable in linear time $\cO(n)$, and its redundancy is $(2\del+1)(n/\ell-1)$ bits.
\end{theorem}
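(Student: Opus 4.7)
The plan is to verify, step by step, that the explicit code $\cD_\del(\ell,n)$ of Construction~\ref{cons1} meets all three claims of the theorem: the redundancy count, linear-time encodability and block-by-block decodability, and correctness of per-block deletion detection. The natural design I would follow is marker-based: dedicate the last $2\del+1$ positions of each of the first $n/\ell-1$ blocks to a fixed synchronization pattern that remains identifiable after up to $\del$ deletions, and constrain the remaining data portion (via an appropriate run-length-type condition on the bits at the block boundary) so that no copy of this pattern can be faked inside a block. The terminal block carries pure information.

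The redundancy count is immediate: each of the $n/\ell-1$ non-terminal blocks contributes $2\del+1$ fixed bits, giving a total of $(2\del+1)(n/\ell-1)$ bits, as claimed. Encoding is a left-to-right pass that maps information bits into the constrained data portions and then appends the fixed pattern, which runs in $\cO(n)$ time.

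For correctness I would proceed by induction on the block index $j$. Assume the decoder has already located the start $\alpha_j$ of block $j$ in the received word $\mathbf{y}$ (with $\alpha_1=1$). It then reads the next $\ell$ bits of $\mathbf{y}$ and searches this window for the (possibly deletion-shortened) synchronization pattern. Two facts are needed: first, for any codeword and any deletion pattern with at most $\del$ deletions in block $j$, the pattern has a uniquely identifiable endpoint inside the window, yielding the deletion count $d_j$; second, this endpoint determines the image of block~$j$ in $\mathbf{y}$ and therefore the starting position $\alpha_{j+1}$ of the next block. Both facts rest on the distinguishability of the pattern under up to $\del$ deletions, which follows from its structure together with the boundary constraints imposed by the code. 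Iterating this locally over the $n/\ell$ blocks produces the exact per-block deletion counts in linear time, which also establishes block-by-block decodability in the sense of Definition~\ref{def3} with $\ell'=\ell$.

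The main obstacle is the uniqueness of parsing against a worst-case adversary: deletions may be split arbitrarily between the data portion of block~$j$ and its trailing synchronization pattern, and can also interact with the leading bits of block $j+1$ that ``leak'' into the $\ell$-bit decoder window. The delicate situation is when a deletion inside the pattern is compensated by a long run coming from the data of block~$j$ or from the beginning of block~$j+1$, creating the illusion that the pattern ends at a shifted position. Ruling this out is precisely the purpose of the boundary constraint imposed on the data portion, and a careful case analysis on how the $\del$-deletion budget of block~$j$ is split across data and pattern is what I expect to be the technically demanding part of the argument. Once this analysis is complete, all three conclusions of the theorem follow directly.
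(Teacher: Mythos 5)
There is a genuine gap: what you have written is a plan rather than a proof, and the one step you yourself flag as ``the technically demanding part'' --- the case analysis showing that the deletion budget of block $j$, however it is split between the data portion, the marker, and the leading bits of block $j+1$, cannot fool the decoder --- is exactly the content of the theorem and is never carried out. Saying that unique parsing ``follows from its structure together with the boundary constraints imposed by the code'' asserts the conclusion without an argument. The paper resolves this by a concrete computation (organized as an induction on $\del$ after reducing to $n/\ell=2$): the last $\del$ bits of $\x^j$ are forced to $\mathbf{1}^{\del}$ and the first $\del+1$ bits of $\x^{j+1}$ to $\mathbf{0}^{\del+1}$, so if $\del_j$ deletions hit block $j$ then at least $\del-\del_j$ of the trailing ones survive as the last bits of block $j$'s image, and at least one leading zero of block $j+1$ survives immediately after; hence in the $\del$-bit window $\y_{[\alpha_j+\ell-\del,\,\alpha_j+\ell-1]}$ the first zero sits at position $\beta_j=\del-\del_j+1$ regardless of where the deletions fall, which is what lets the decoder read off $\del_j=\del-\beta_j+1$. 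Your proposal contains no analogue of this pinpointing argument.

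A secondary issue is that the construction you describe is not the paper's. You place the entire $(2\del+1)$-bit marker at the tail of each non-terminal block, leave the last block unconstrained, and add an unspecified ``run-length-type condition'' on the data so the marker ``cannot be faked inside a block.'' The paper's marker instead straddles the boundary ($\mathbf{1}^{\del}$ ending block $j$, $\mathbf{0}^{\del+1}$ opening block $j+1$, with the first block carrying only $\del$ and the last only $\del+1$ fixed bits), and it imposes \emph{no} constraint on the information bits: the decoder never searches for the pattern, it only inspects a fixed $\del$-bit window, so pattern-faking inside the data is irrelevant. Your variant might be made to work, but as stated it is underspecified (the run-length condition is not defined, and would cost extra redundancy you have not accounted for), so the redundancy claim $(2\del+1)(n/\ell-1)$ is not actually established for the object you describe. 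To repair the proposal, adopt the boundary-straddling marker and supply the explicit window argument above.
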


\begin{theorem}
\label{thm2}
For $\del,\ell,n\in \mathbb{Z}^+$, with $2\del< \ell \leq n/3$,  the redundancy $r_\del(\ell,n)$ of any code $\cC_{\del}(\ell,n)$ that detects up to $\del$ deletions per block satisfies $$r_\del(\ell,n) \geq 2\del(n/\ell -1) + \varepsilon(n/\ell -2),$$
where $\varepsilon = 2\del - \log (2^{2\del} -1)>0$.
\end{theorem}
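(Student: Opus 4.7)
My plan is to prove the bound via a disjointness-of-images argument, strengthened by local analyses on two and three consecutive blocks. For any code $\cC = \cC_\del(\ell, n)$ and any deletion pattern $\bfd = (d_1, \ldots, d_{n/\ell}) \in \{0, \ldots, \del\}^{n/\ell}$, let
\[
\cS_\bfd = \bigl\{\y \in \Ft^{n-|\bfd|} : \y \text{ is obtainable from some } \x \in \cC \text{ by deleting } d_j \text{ bits in block } j\bigr\}.
\]
The decoding requirement of Definition~\ref{def2} forces $\cS_\bfd \cap \cS_{\bfd'} = \emptyset$ whenever $\bfd \neq \bfd'$, since a shared $\y$ would leave the decoder ambiguous. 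Together with $\cS_\bfd \subseteq \Ft^{n-|\bfd|}$, this yields the primary combinatorial budget $\sum_{|\bfd|=k} |\cS_\bfd| \leq 2^{n-k}$ for each weight $k$, which is the handle I would exploit throughout.

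For the leading $2\del(n/\ell - 1)$ term, I would first establish a two-block local bound $|\cC_\del(\ell, 2\ell)| \leq 2^{2\ell - 2\del}$, then lift it to $n/\ell$ blocks by an inductive shortening over the last block. The two-block bound comes from examining same-length cross-collisions between patterns $(d_1, d_2)$ and $(d_1', d_2')$ with $d_1 + d_2 = d_1' + d_2'$ but $d_1 \neq d_1'$: disjointness of the image sets forces the $2\del$ bits straddling the block boundary to be essentially fully determined by the rest of the codeword, while the remaining $2\ell - 2\del$ bits are free. The inductive step fixes a prefix of length $(n/\ell - 2)\ell$ and applies the two-block bound to the last two blocks, yielding $|\cC| \leq 2^{n - 2\del(n/\ell - 1)}$ and hence the first term.

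The correction term $\varepsilon(n/\ell - 2)$ comes from a three-block analysis on a triple $\x^{j-1}\x^j\x^{j+1}$. The suggestive rewriting $\varepsilon = \log\!\bigl(2^{2\del}/(2^{2\del} - 1)\bigr)$ indicates a pigeonhole on $2^{2\del}$ candidate joint configurations of the boundary bits flanking the middle block, exactly one of which must be excluded by an unavoidable cross-pattern ambiguity among the $(\del+1)^3$ patterns on the triple. Summing $\varepsilon$ per triple over the $n/\ell - 2$ triples then yields the second term.

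The main obstacle will be to precisely identify this single forbidden configuration, and to combine the pair-wise and triple-wise contributions additively without double-counting across overlapping triples. The naive projection/sphere-packing argument gives only $\log(\del+1)$ per boundary, so the tight $2\del$ coefficient already requires a careful boundary-bit analysis rather than a generic counting argument, and the $\varepsilon$ refinement is even more delicate. I expect the cleanest route is to strengthen the two-block induction into a joint three-block inductive step that directly shows each additional block beyond the second contributes $2\del + \varepsilon$ bits of redundancy, exactly matching the claimed formula.
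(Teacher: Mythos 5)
Your disjointness principle (two distinct per-block deletion patterns can never produce the same received string) is exactly the engine the paper uses, but the proposal stops short of the two concrete combinatorial identifications that actually produce the claimed numbers, and one of them is aimed at the wrong bits. For the leading term, you assert that disjointness forces the $2\del$ bits straddling each boundary to be ``essentially fully determined by the rest of the codeword,'' but this is the statement that needs proof. The paper proves something sharper and fully explicit: by colliding the pattern ``delete $x^j_{i_1},\ldots,x^j_\ell$ and $x^{j+1}_1,\ldots,x^{j+1}_{i_2-1}$'' with ``delete $x^j_{i_1+1},\ldots,x^j_\ell$ and $x^{j+1}_1,\ldots,x^{j+1}_{i_2}$,'' one gets $x^j_{i_1}\neq x^{j+1}_{i_2}$ for all $i_1\in[\ell-\del+1,\ell]$, $i_2\in[\del]$ (condition~\eqref{eq:condition_2}); over $\mathbb{F}_2$ this forces each boundary to be $\langle\mathbf{1}^{\del},\mathbf{0}^{\del}\rangle$ or $\langle\mathbf{0}^{\del},\mathbf{1}^{\del}\rangle$, and a second collision (between two \emph{different codewords} of opposite type) shows the whole code must commit to a single type, so the $2\del(n/\ell-1)$ positions are globally constant. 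Without this, your inductive shortening does not accumulate $2\del$ per boundary: fixing a prefix of length $(n/\ell-2)\ell$ and invoking a two-block bound on the tail only certifies one boundary's worth of redundancy unless you separately show the projection onto the first $n/\ell-1$ blocks is itself a deletion-detecting code with the same per-boundary constraint, which is precisely the local condition you have not established.

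For the $\varepsilon(n/\ell-2)$ term the misidentification is more serious. You propose a pigeonhole over the $2^{2\del}$ configurations of ``the boundary bits flanking the middle block,'' but those bits are already completely determined by the first part of the argument, so excluding one of their configurations contributes zero additional redundancy. The paper's forbidden configuration lives on $2\del$ \emph{information} bits of each interior block, namely $x^{j+1}_{\del+1},\ldots,x^{j+1}_{3\del}$: colliding the pattern that deletes the $(j,j+1)$ boundary with the pattern that deletes the $(j+1,j+2)$ boundary (equations~\eqref{eq:combi_1}--\eqref{eq:combi_2}) shows these $2\del$ bits must differ from the globally fixed boundary pattern $\langle\mathbf{1}^{\del},\mathbf{0}^{\del}\rangle$, which removes one of $2^{2\del}$ values and yields exactly $\varepsilon=2\del-\log(2^{2\del}-1)$ per block for $j\in[n/\ell-2]$. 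You explicitly flag ``precisely identify this single forbidden configuration'' as the main obstacle, and that obstacle is the substance of the proof; moreover, verifying that the two received strings coincide \emph{everywhere else} (not just in the first $2\del$ positions after the boundary) requires a nontrivial alignment of the two shifted copies of block $j+1$, so the step is not a routine pigeonhole even once the right bits are located. As written, the proposal is a correct framing plus a plan, not a proof.
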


\begin{theorem}
\label{thm3}
For $\del,\ell,n\in \mathbb{Z}^+$, with $2\del< \ell \leq n/2$,  the redundancy $r_\del(\ell,n)$ of a block-by-block decodable code $\cC_{\del}(\ell,n)$ that detects up to $\del$ deletions per block satisfies $$r_\del(\ell,n) \geq (2\del+1)(n/\ell -1).$$
\end{theorem}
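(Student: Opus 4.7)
The plan is to show that each of the $n/\ell - 1$ internal boundaries of a block-by-block decodable code forces a region of $2\del + 1$ codeword bits to be constant across the code, and then to sum these per-boundary contributions to obtain the claimed lower bound on redundancy.

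By Definition~\ref{def3}, a block-by-block decodable code $\cC_{\del}(\ell,n)$ comes with a decoder $\mathrm{Dec}\colon\mathbb{F}_2^\ell\to\{0,\ldots,\del\}$ whose input is the $\ell$-bit window $\y_{[\alpha_j,\alpha_j+\ell-1]}$ and whose output is the number $d_j$ of deletions in block $j$. The preimages $A_d = \mathrm{Dec}^{-1}(d)$ partition $\mathbb{F}_2^\ell$. Fix an internal index $j\in[n/\ell-1]$, let $\cB_j\subseteq\mathbb{F}_2^\ell$ be the projection of the code onto block $j$, and let $T_d\subseteq\mathbb{F}_2^d$ collect the $d$-bit strings that can appear as window tails; these tails are the first $d$ bits of the deleted image of block $j+1$, and hence are determined by $\cB_{j+1}$ together with the $\le\del$ deletions inside it. Block-by-block decodability then imposes $\bigl\{\langle \z,\bfu\rangle : \z\in D_d(\bfb),\ \bfu\in T_d\bigr\}\subseteq A_d$ for every $\bfb\in\cB_j$ and every $d\in\{0,\ldots,\del\}$.

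The core of the proof is to extract rigid constraints from the disjointness of $A_0,\ldots,A_\del$. Since every $\bfb\in\cB_j$ satisfies $\bfb=\langle \bfb_{[1,\ell-d]},\bfb_{[\ell-d+1,\ell]}\rangle$ with $\bfb_{[1,\ell-d]}\in D_d(\bfb)$, disjointness of $A_0$ and $A_d$ immediately forces $\bfb_{[\ell-d+1,\ell]}\notin T_d$ for every $d\in\{1,\ldots,\del\}$. Additional cross-collisions between $d$- and $d'$-deletion windows, again constructed by deleting trailing bits of suitable codewords in $\cB_j$ and padding with tails from $T_d$ and $T_{d'}$, must likewise be forbidden. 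I plan to show by a case analysis that simultaneously avoiding all such collisions forces the last $\del$ bits of every $\bfb\in\cB_j$ to equal a common fixed pattern and the first $\del+1$ bits of every $\bfb'\in\cB_{j+1}$ to equal a common complementary pattern, together contributing $2\del+1$ forced bits at the boundary between blocks $j$ and $j+1$.

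Because $\ell>2\del$, within a single block the ``start'' forced by the preceding boundary and the ``end'' forced by the succeeding boundary do not overlap, so the forced bits at distinct boundaries accumulate without double-counting. Summing over the $n/\ell-1$ internal boundaries then yields $r_\del(\ell,n)\ge(2\del+1)(n/\ell-1)$. The main obstacle is the case analysis above: one must verify that no alternative split of the $2\del+1$ constraint between $\cB_j$ and $\cB_{j+1}$, and no alternative choice of tail sets $T_1,\ldots,T_\del$, reduces the per-boundary forced bits below $2\del+1$. The key trade-off to quantify is that shrinking some $T_d$ enlarges the forced prefix of $\cB_{j+1}$, while enlarging it enlarges the forced suffix of $\cB_j$, so that the sum of these forced regions remains at least $2\del+1$ in every regime.
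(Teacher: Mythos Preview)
Your overall architecture---show that each of the $n/\ell-1$ internal boundaries forces $2\del+1$ fixed bits, then sum using the non-overlap guaranteed by $\ell>2\del$---matches the paper's, and your initial observation that $\bfb_{[\ell-d+1,\ell]}\notin T_d$ follows from the disjointness of $A_0$ and $A_d$ is correct and useful. But the proposal has a genuine gap precisely where you yourself flag ``the main obstacle'': the case analysis is not carried out, and the trade-off heuristic (``shrinking $T_d$ enlarges the forced prefix of $\cB_{j+1}$, enlarging it enlarges the forced suffix of $\cB_j$, and the sum is always $\ge 2\del+1$'') is neither proved nor obviously true. To make it rigorous you would have to quantify this trade-off for \emph{all} $d\in[\del]$ simultaneously and over arbitrary tail sets $T_1,\ldots,T_\del$, and then extract exactly the tight constant $2\del+1$; that is the entire content of the theorem, and it is absent from the proposal.

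The paper avoids this optimization-over-tail-sets entirely. Rather than treating the $T_d$ as free design variables and bounding a trade-off, it exhibits \emph{explicit} pairs of deletion patterns that force each claimed bit. First, for any $i_1\in[\ell-\del+1,\ell]$ and $i_2\in[\del]$, deleting the last $\ell-i_1+1$ bits of $\x^j$ together with the first $i_2-1$ bits of $\x^{j+1}$, versus deleting the last $\ell-i_1$ bits of $\x^j$ together with the first $i_2$ bits of $\x^{j+1}$, gives identical $\y$ whenever $x^j_{i_1}=x^{j+1}_{i_2}$; this already forces $\x^j_{[\ell-\del+1,\ell]}$ and $\x^{j+1}_{[\del]}$ to be complementary constant strings, and a separate two-codeword collision pins them to a \emph{single} pattern across the whole code (this step holds for any $\cC_\del(\ell,n)$, not just block-by-block decodable ones). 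Then, using block-by-block decodability, one more explicit pair---delete one of the last $\del$ bits of $\x^j$ and the first $\del$ bits of $\x^{j+1}$, versus delete nothing in $\x^j$---produces the same $\ell$-bit window $\y_{[\alpha_j,\alpha_j+\ell-1]}=\langle \x^j_{[\ell-\del]},\mathbf{1}^\del\rangle$ unless $x^{j+1}_{\del+1}$ also equals the complement, yielding the extra bit. This direct collision construction is short, requires no optimization, and gives the constant $2\del+1$ immediately; I would recommend replacing your deferred case analysis with it rather than trying to push the $T_d$ trade-off through.
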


{\em Discussion:} Theorem~\ref{thm1} shows that the code $\cD_{\del}(\ell,n)$ which we present in Construction~\ref{cons1} is efficiently encodable and decodable and has redundancy $(2\del+1)(n/\ell-1)$ bits. As we will show in Section~\ref{del:cons}, the total redundancy of $\cD_{\del}(\ell,n)$ is an aggregate of: \begin{enumerate*}[label={\textit{(\roman*)}}] \item $2\del+1$ redundant bits per block $\x^j$, for $j\in[2,n/\ell-1]$; \item~$\del$ redundant bits for $\x^1$; \item $\del+1$ redundant bits for $\x^{n/\ell}$. \end{enumerate*} This means that the redundancy per block only depends on $\del$, and is constant in terms of the size of the block $\ell$ and the size of the codeword $n$. 
Note that in case we want to also {\em correct} $\del$ deletions per block, then the redundancy per block needs to be $\Omega(\del\log (\ell/\del))$~\cite{L66}, i.e., at least logarithmic in $\ell$. 

Furthermore, $\cD_{\del}(\ell,n)$ is block-by-block decodable, so it follows from Theorem~\ref{thm3} that $\cD_{\del}(\ell,n)$ has {\em optimal} redundancy among all block-by-block decodable codes. Theorem~\ref{thm2} gives a lower bound on the redundancy of any code $\cC_{\del}(\ell,n)$ (not necessarily block-by-block decodable) that detects up to $\del$ deletions per block for $n/\ell \geq 3$. By comparing this lower bound to Theorem~\ref{thm1}, it is easy see that our code $\cD_{\del}(\ell,n)$ has an asymptotically optimal redundancy in $\del$ among all codes $\cC_{\del}(\ell,n)$ with $n/\ell \geq 3$.

\subsection{Code Construction}
\label{del:cons}
The code that we present for detecting up to $\del$ deletions per block is given by the following construction.
\begin{construction}[Code detecting up to $\del$ deletions]
\label{cons1}
For all $\del,\ell,n\in \mathbb{Z}^+$, with \mbox{$2\del< \ell \leq n/2$}, we define the following%
\begin{align*}
\mathcal{A}_{\del}^0(\ell) &\triangleq \big\{\mathbf{x}\in \mathbb{F}_2^{\ell}~\big|~\mathbf{x}_{[1,\del+1]}=\mathbf{0}^{\del+1}\big\},\\
\mathcal{A}_{\del}^1(\ell) &\triangleq \big\{\mathbf{x}\in \mathbb{F}_2^{\ell} ~\big|~ \mathbf{x}_{[\ell-\del+1,\ell]}=\mathbf{1}^{\del}\big\}.
\end{align*}
The code $\cD_{\del}(\ell,n)\subseteq \mathbb{F}_2^n$ is defined as the set
\begin{equation*}
\left\{
  \langle \mathbf{x}^1,\ldots,\mathbf{x}^{n/\ell}\rangle \;\middle|\;
  \begin{aligned}
  & \mathbf{x}^1\in \mathcal{A}_{\del}^1(\ell),\\
  & \mathbf{x}^j \in \mathcal{A}_{\del}^1(\ell) \cap \mathcal{A}_{\del}^0(\ell), \forall j \in [2, \frac{n}{\ell}-1], \\
  & \mathbf{x}^{n/\ell} \in \mathcal{A}_{\del}^0(\ell).
  \end{aligned}
\right\}.
\end{equation*}
\end{construction}
\iflong
Before we prove Theorem~\ref{thm1}, we will provide the steps of the decoding algorithm for the code $\cD_{\del}(\ell,n)$ and give an example for $\del=1$. The encoding algorithm is omitted since encoding can be simply done by setting the information bits to the positions in $[n]$ that are not restricted by Construction~\ref{cons1}. The proof of Theorem~\ref{thm1} is given in Section~\ref{proof1}.
\else
\fi

%

\subsection{Decoding} 
\label{del:dec}
Suppose that $\mathbf{x}\in \cD_{\del}(\ell,n)$ (with $2\del< \ell \leq n/2$) is affected by at most $\del$ deletions in each block $\mathbf{x}^j$, for all $j\in[n/\ell]$, resulting in $\mathbf{y}\in \mathbb{F}_2^*$. The input of the decoder is $\mathbf{y}$, and since we consider deletions only, the output is \mbox{$(\delta_1,\delta_2,\ldots,\delta_{n/\ell})\in \mathbb{Z}_{\delta+1}^{n/\ell}$}, where $\delta_j \leq \del$ denotes the number of deletions that have occurred in block $j$. 

The decoding is performed on a block-by-block basis, so before decoding block $j$, we know that the previous $j-1$ blocks have been decoded correctly. Therefore, after decoding the first $j-1$ blocks, the decoder knows the correct starting position of block $j$ in $\mathbf{y}$. Let the starting position of block $j\in [n/\ell]$ in $\mathbf{y}$ be $\alpha_j$, with $\alpha_1=1$; and $\forall j\in [n/\ell -1]$ let $$\mathbf{s}^j \triangleq \mathbf{y}_{[\alpha_j+\ell-\del,\alpha_j+\ell-1]}.$$
To decode block $j$, the decoder scans the bits in $\mathbf{s}^j$ from left to right searching for the first occurrence of a $0$ (if any). If $\mathbf{s}^j$ has no zeros, i.e., \mbox{$\mathbf{s}^j=\mathbf{1}^{\del}$}, then the decoder declares that no deletions ($\delta_j=0$) have occurred in block $j$, and sets the starting position of block \mbox{$j+1$} to \mbox{$\alpha_{j+1}=\alpha_j+\ell$}. Else, if the first occurrence of a $0$ in $\mathbf{s}^j$ is at position $\beta_j$, with \mbox{$1\leq \beta_j \leq \del$}, then the decoder declares that $\del_j=\del-\beta_j+1$ deletions have occurred in block $j$, and sets the starting position of block $j+1$ to \mbox{$\alpha_{j+1}=\alpha_j+\ell-\del_j$}. The decoder repeats this process for each block until the first $n/\ell-1$ blocks are decoded. Finally, the decoder checks the length of the last block in $\mathbf{y}$ based on its starting position $\alpha_{n/\ell}$, and outputs $\del_{n/\ell}$ accordingly. Note that this decoder satisfies Definition~\ref{def3} since the index of the last bit in $\mathbf{s}^j$, $\alpha_j+\ell-1$, is $\ell$ positions away from the starting position of the block $j$, $\alpha_j$.
\iflong
\begin{remark} 
A code equivalent to $ \cD_{\del}(\ell,n)$ can be obtained by flipping all the zeros to ones and vice-versa in the positions that are restricted by Construction~\ref{cons1}. The decoding algorithm described above can be also modified accordingly. However, this decoding algorithm cannot be applied for a code that combines codewords from the two aforementioned constructions. 
\end{remark}
\fi

\subsection{Example for $\del=1$}
Next we give an example of our code for $\del=1, \ell=5,$ and $n=20$. Consider a codeword $\mathbf{x}\in \cD_{1}(5,20)$ given by $$\mathbf{x}=\overbrace{10\underline{1}0\sepr}^{\mathbf{x}^1}\overbrace{\sepb\sepb 11\sepr}^{\mathbf{x}^2}\overbrace{\sepb\sepb01\underline{\sepr}}^{\mathbf{x}^3}\overbrace{\underline{\sepb}\sepb 100}^{\mathbf{x}^4}.$$ Suppose that the bits underlined in $\mathbf{x}$ are deleted, resulting in $$\mathbf{y}=10010011100010100.$$ To determine the number of deletions in the first block, the decoder first examines $\mathbf{s}^1=y_5=0$, which implies that $\beta_1=1$, and therefore declares that $\delta_1=1$ deletion has occurred in $\mathbf{x}^1$. The starting position of block $2$ is thus set to $\alpha_2=5$. Then, since $\mathbf{s}^2=y_9=1$, the decoder declares that no deletions ($\delta_2=0$) have occurred in $\mathbf{x}^2$, and sets the starting position of block $3$ to $\alpha_3=10$. Similarly, we have $\mathbf{s}^3=y_{14}=0$, implying that $\delta_3=1$ deletion has occurred in $\mathbf{x}^3$, $\beta_3=1$, and $\alpha_4=14$. Now since $\ell-1=4$ bits are left for the last block, the decoder declares that $\delta_4=1$ deletion has occurred in $\mathbf{x}^4$. 

\subsection{Proof of Theorem~\ref{thm1}}
\label{proof1}
The redundancy of the code $\cD_{\del}(\ell,n)$ follows from the number of bits that are fixed in Construction~\ref{cons1} which is \mbox{$(2\del+1)(n/\ell-1)$}. Encoding can be simply done by setting the information bits to the positions in $[n]$ that are not restricted by Construction~\ref{cons1}.
Hence, the complexities of the encoding and decoding algorithms are $\cO(n)$ since they involve a single pass over the bits with constant time operations. Next, we prove the correctness of the decoding algorithm.

Consider a codeword $\mathbf{x} \in \cD_{\del}(\ell,n)$ that is affected by at most $\delta$ deletions per block resulting in a string~$\mathbf{y}$.  Since the decoding is done on a block-by-block basis, it is enough to prove the correctness of the algorithm for the case where we have $n/\ell=2$ concatenated blocks. To prove that the code $\cD_{\del}(\ell,n)$ detects up to $\del$ deletions per block, we use induction on $\del\in \mathbb{Z}^+$, with $2\del< \ell$ and $n=2\ell$.

{\em Base case:} we first prove correctness for $\del=1$. The codeword is given by $$\mathbf{x}=\langle \mathbf{x}^1,\mathbf{x}^2 \rangle=(x_1,\ldots,x_{n/2},x_{n/2+1},\ldots,x_n) \in \cD_1(n/2,n).$$ To decode the first block, the decoder observes $\mathbf{s}^1=y_{n/2}$. Now consider two cases: \begin{enumerate*}[label={\textit{(\roman*)}}] \item no deletion has occurred in $\mathbf{x}^1$; and \item one bit was deleted in $\mathbf{x}^1$. \end{enumerate*} In the first case, based on Construction~\ref{cons1}, we always have \mbox{$\mathbf{s}^1=y_{n/2}=1$}. Therefore, it follows from the decoding algorithm described in Section~\ref{del:dec}, that the decoder can correctly declare that no deletions have occurred in the first block. Now consider the second case mentioned above. It follows from the code construction that the values of the first two bits of $\mathbf{x}^2$ are both~$0$. Hence, it is easy to see that for any single deletion in $\mathbf{x}^1$, and for any single deletion in $\mathbf{x}^2$, we always have \mbox{$\mathbf{s}^1=y_{n/2}=0$}. Thus, the decoder can always correctly detect a single deletion in $\mathbf{x}^1$. The number of deletions in the second block is consequently determined based on the starting position of the second block in $\mathbf{y}$, and the number of bits in $\mathbf{y}$ that are yet to be decoded. This concludes the proof for $\del=1$.

{\em Inductive step:} we assume that the code $\cD_{\del-1}(n/2,n)$ detects up to $\del-1$ deletions per block, and prove that $\cD_{\del}(n/2,n)$ detects up to $\del$ deletions per block. Based on Construction~\ref{cons1}, it is easy to see that if \mbox{$x\in \cD_{\del}(n/2,n)$}, then \mbox{$x \in \cD_{\del-1}(n/2,n)$}, and hence \mbox{$\cD_{\del}(n/2,n) \subset  \cD_{\del-1}(n/2,n)$}. Therefore, it follows from the inductive hypothesis that $\cD_{\del}(n/2,n)$ detects up to $\del-1$ deletions per block. Next, we prove that $\cD_{\del}(n/2,n)$ can also detect exactly $\del$ deletions per block. Suppose that exactly $\del$ deletions occur in~$\mathbf{x}^1$. To decode the first block, the decoder scans the bits of $\mathbf{s}^1=\mathbf{y}_{[\ell-\del+1,\ell]}$ from left to right searching for the first occurrence of a $0$, as explained in Section~\ref{del:dec}. It follows from the code construction that for any $\del$ deletions in $\mathbf{x}^1$, the first bit of $\mathbf{s}^1$ is always $0$, i.e., $y_{\ell-\del+1}=0$. To see this, notice that: \begin{enumerate*}[label={\textit{(\roman*)}}] \item for any $\del$ deletions in $\mathbf{x}^1$, the first bit belonging to the second block in $\mathbf{y}$ will shift $\del$ positions to the left; and \item since $\mathbf{x}_{[n/2+1,n/2+\del+1]}=\mathbf{0}^{\del+1}$, and given that we consider at most $\del$ deletions in $\mathbf{x}^2$, then the first bit belonging to the second block in $\mathbf{y}$ is always a $0$. \end{enumerate*} Hence, for any $\del$ deletions in $\mathbf{x}^1$, and for any $\del$ or fewer deletions in $\mathbf{x}^2$, we have $y_{\ell-\del+1}=0$. It follows from the decoding algorithm that the decoder in this case declares that $\del$ deletions have occurred in~$\mathbf{x}^1$. Furthermore, the number of deletions in the second block is consequently determined based on the starting position of the second block in $\mathbf{y}$, and the number of bits in $\mathbf{y}$ that are yet to be decoded. Therefore, we have proved that the code $\cD_{\del}(\ell,n)$ detects up to $\del$ deletions per block.

\subsection{Proofs of Theorem~\ref{thm2} and Theorem~\ref{thm3}}
The proofs of Theorem~\ref{thm2} and Theorem~\ref{thm3} have a common part where we show that for any codeword $\x \in \cC_\del(\ell,n)$, the last $\del$ bits of the blocks $\x^1,\dots, \x^{n/\ell -1}$ and the first $\del$ bits of blocks $\x^2,\dots,\x^{n/\ell}$ must be predetermined (fixed). This gives a preliminary lower bound on the redundancy of any code $\cC_\del(\ell,n)$, with $2\del< \ell \leq n/2$, that is 
\begin{equation}
\label{eq:bound1}
r_\del(\ell,n)\geq 2\del(n/\ell-1).
\end{equation}  
Then, to obtain the result in Theorem~\ref{thm2}, we improve the bound in~\eqref{eq:bound1} for $n/\ell \geq 3$ by showing that for any code $\cC_\del(\ell,n)$ an additional constraint must be imposed on each block $\x^j$ for $j\in [n/\ell-2]$, which gives 
\begin{equation}
\label{eq:bound2}
r_\del(\ell,n)\geq 2\del(n/\ell -1) + \varepsilon(n/\ell -2),
\end{equation} 
for some $0<\varepsilon<1$. For Theorem~\ref{thm3}, we specifically consider block-by-block decodable codes defined in Definition~\ref{def3}. We show that for such codes the bound in~\eqref{eq:bound1} can be improved to
\begin{equation}
\label{eq:bound3}
r_\del(\ell,n) \geq (2\del+1)(n/\ell -1).
\end{equation}
\iflong
\begin{figure*}
\begin{align}\label{eq:combi_1}
\y_1 &= (\cdots,{x_{\ell-\del}^j},{\color{red}x_{\del+1}^{j+1}},\ {\color{red}\cdots},\ {\color{red}x_{3\del}^{j+1}} \ ,x_{3\del+1}^{j+1},\cdots, x_{5\del}^{j+1}, \cdots \cdots \cdots, {\color{violet}x_{\ell-3\del+1}^{j+1}},{\color{violet}\cdots }, {\color{violet}x_{\ell-\del}^{j+1}},\ {\color{blue} x_{\ell-\del+1}^{j+1}},{\color{blue}\cdots}, {\color{blue} x_\del^{j+2}}, x_{\del+1}^{j+2}, \cdots ). \\ 
\y_2 &= (\cdots,{x_{\ell-\del}^j},{\color{blue}x_{\ell-\del+1}^{j}},{\color{blue}\cdots},{\color{blue}x_{\del}^{j+1}},{\color{red}x_{\del+1}^{j+1}},{\color{red}\cdots},{\color{red}x_{3\del}^{j+1}},\cdots \cdots \cdots, {\color{black}x_{\ell-5\del+1}^{j+1}},\cdots, {\color{black}x_{\ell-3\del}^{j+1}}, {\color{violet}x_{\ell-3\del+1}^{j+1}},{\color{violet}\cdots }, {\color{violet}x_{\ell-\del}^{j+1}},x_{\del+1}^{j+2}, \cdots). \label{eq:combi_2}
\end{align}
\end{figure*}
\fi

We start by showing that for every codeword $\x\in \cC_\del(\ell,n)$, each bit in the last $\delta$ bits $x_{\ell-\delta+1}^j,\ldots,x_\ell^j$ of any block $\x^j$, \mbox{$j\in [n/\ell -1]$}, must be different than all $\del$ bits $x_{1}^{j+1},\dots,x_{\del}^{j+1}$ of block $\x^{j+1}$. Suppose that for some $i_1 \in [\ell-\del+1,\ell]$ and $i_2 \in [\del]$, we have $x_{i_1}^j=x_{i_2}^{j+1}$. Consider the deletion combination where the last $\ell-i_1+1$ bits $x_{i_1}^j,\ldots,x_\ell^j$ are deleted in $\mathbf{x}^j$, and the first $i_2-1$ bits $x_1^{j+1},\ldots,x_{i_2-1}^{j+1}$ are deleted in $\x^{j+1}$ (if $i_2=1$ no bits are deleted in $\x^{j+1}$). The resulting string is of the form
\begin{equation*}
\y_1 = (\cdots,x_{i_1-1}^j,{\color{red}x_{i_2}^{j+1}},x_{i_2+1}^{j+1},\cdots, x_{\del}^{j+1},\cdots, x_\ell^{j+1}, \cdots),
\end{equation*}
where all other blocks are not affected by deletions. Now consider a different deletion combination where the last $\ell-i_1$ bits $x_{i_1+1}^j,\ldots,x_\ell^j$ are deleted in $\mathbf{x}^j$, and the first $i_2$ bits $x_1^{j+1},\ldots,x_{i_2}^{j+1}$ are deleted in $\x^{j+1}$ (if $i_1=\ell$ no bits are deleted in $\x^{j}$). The resulting string is of the form
\begin{equation*}
\y_2 = (\cdots,x_{i_1-1}^j,{\color{blue}x_{i_1}^{j}},x_{i_2+1}^{j+1},\cdots, x_{\del}^{j+1},\cdots, x_\ell^{j+1}, \cdots),
\end{equation*}
where all other blocks are not affected by deletions. Since $x_{i_1}^{j}=x_{i_2}^{j+1}$ by assumption, then we have $\y_1=\y_2$. %
\iflong
However, $\y_1$ and $\y_2$ correspond to two different deletion combinations (i.e., two different decoder outputs) given by
\begin{align*}
 \mathrm{Dec}(\y_1) &=(0,\dots,0,\underbrace{\ell-i_1 +1}_{j},\underbrace{i_2-1}_{j+1},0,\dots,0) \in \mathbb{Z}_{\delta+1}^{n/\ell},  \\
 \mathrm{Dec}(\y_2) &= (0,\dots,0,\underbrace{\ell-i_1}_{j},\underbrace{i_2}_{j+1},0,\dots,0) \in \mathbb{Z}_{\delta+1}^{n/\ell}.
\end{align*}
\else
However, $\y_1$ and $\y_2$ correspond to two different deletion combinations, i.e., two different decoder outputs.
\fi

Therefore, if for any $i_1 \in [\ell-\del+1,\ell]$ and $i_2 \in [\del]$ we have $x_{i_1}^{j}=x_{i_2}^{j+1}$, then there exists two different decoder outputs that correspond to the same decoder input, i.e., $\y_1=\y_2$ with $\mathrm{Dec}(\y_1)\neq \mathrm{Dec}(\y_2)$. This contradicts the definition of the decoding function given in Definition~\ref{def2}. We conclude that the following conditions are necessary: 
\begin{equation}\label{eq:condition_2}
    x_{i_1}^j\neq x_{i_2}^{j+1},\ \forall \ i_1\in [\ell-\del+1,\ell], i_2\in[\del], \text{and } j\in [n/\ell -1].
\end{equation}

Since we focus on binary codes, the previous constraints imply that the last $\del$ bits of every block $\x^j$, \mbox{$j\in [n/\ell-1]$}, must be equal and must also be different than the first $\del$ bits of the block $\x^{j+1}$, which also should be equal. Namely, the code $\cC_{\del}(\ell,n)$ can have codewords that either satisfy \begin{enumerate*}[label={\textit{(\roman*)}}] \item $\x^j_{[\ell-\del+1,\ell]}=\mathbf{1}^{\del}$ and $\x^{j+1}_{[\del]}=\mathbf{0}^{\del}$; or \item $\x^j_{[\ell-\del+1,\ell]}=\mathbf{0}^{\del}$ and $\x^{j+1}_{[\del]}=\mathbf{1}^{\del}$. \end{enumerate*}
\iflong
Next, we show that although the previous statement is true, the code $\cC_{\del}(\ell,n)$ cannot have a pair of codewords $\x_1,\x_2\in \cC_{\del}(\ell,n)$, where $\x_1$ satisfies \textit{(i)} and $\x_2$ satisfies \textit{(ii)}. To prove this, we suppose that there exists such a pair $\x_1,\x_2\in \cC_{\del}(\ell,n)$, and then show that in this case we have two different decoder outputs that correspond to the same input string $\y$, which contradicts the definition of the decoding function in Definition~\ref{def2}.

Consider the following two deletion combinations for a given \mbox{$j\in [n/\ell-1]$}. In the first one, $\del-1$ out of the last $\del$ bits are deleted in $\x_1^j$, the first $\del$ bits are deleted in $\x_1^{j+1}$, and no bits are deleted in other blocks. In the second one, the last $\del$ bits are deleted in $\x_2^{j}$, $\del-1$ out of the first $\del$ bits are deleted in $\x_2^{j+1}$, and no bits are deleted in other blocks. In both combinations, the resulting string is
\begin{equation*}
\y_1 = \y_2 = (\cdots,{x_{\ell-\del}^j},{\color{red}1},x_{\del+1}^{j+1},x_{\del+2}^{j+1},\cdots, x_\ell^{j+1}. \cdots),
\end{equation*}
Since $\y_1$ and $\y_2$ correspond to two different decoder outputs
\begin{align*}
 \mathrm{Dec}(\y_1) &=(0,\dots,0,\underbrace{\del-1}_{j},\underbrace{\del}_{j+1},0,\dots,0) \in \mathbb{Z}_{\delta+1}^{n/\ell},  \\
 \mathrm{Dec}(\y_2) &= (0,\dots,0,\underbrace{\del}_{j},\underbrace{\del-1}_{j+1},0,\dots,0) \in \mathbb{Z}_{\delta+1}^{n/\ell},
\end{align*}
then we have $\y_1=\y_2$ with $\mathrm{Dec}(\y_1)\neq \mathrm{Dec}(\y_2)$ which contradicts Definition~\ref{def2}. 
\else
We show in~\cite{KHB21} that although the previous statement is true, the code $\cC_{\del}(\ell,n)$ cannot have a pair of codewords \mbox{$\x_1,\x_2\in \cC_{\del}(\ell,n)$}, where $\x_1$ satisfies \textit{(i)} and $\x_2$ satisfies \textit{(ii)}. To prove this, we suppose that there exists such a pair $\x_1,\x_2\in \cC_{\del}(\ell,n)$, and then show that in this case we have two different decoder outputs that correspond to the same input string $\y$, which contradicts  Definition~\ref{def2}.
\fi

Hence, we conclude that the first $\del$ bits and the last $\del$ bits in every block $\x^j$, $j\in [2,n/\ell-1]$, must be predetermined, and the last $\del$ bits of $\x^1$ and the first $\del$ bits of $\x^{n/\ell}$ must be predetermined. Therefore, $$|\cC_\del(\ell,n)| \leq 2^{n - 2\del(n/\ell -1)},$$ which gives the lower bound on the redundancy in~\eqref{eq:bound1}.



\iflong
Given the aforementioned constraints, we show next that for $n/\ell \geq 3$, the bound in~\eqref{eq:bound1} is not achievable, i.e., a redundancy of exactly $2\del(n/\ell -1)$ bits is not sufficient for decoding. 

Consider the following two deletion combinations for a given $j\in[n/\ell-2]$. In the first one, the last $\del$ bits are deleted in $\x^j$, the first $\del$ bits are deleted in $\x^{j+1}$, and no bits are deleted in other blocks. The resulting string $\y_1$ is of the form given in~\eqref{eq:combi_1}. In the second one, the last $\del$ bits are deleted in  $\x^{j+1}$, the first $\del$ bits are deleted in $\x^{j+2}$, and no bits are deleted in other blocks. The resulting string $\y_2$ is of the form given in~\eqref{eq:combi_2}. As explained previously, since the deletion combinations are different, it must hold that $\y_1\neq \y_2$. Since the bits $x_{\ell-\del+1}^{j},\cdots,x_{\del}^{j+1}$ are fixed, it follows from~\eqref{eq:combi_1} and~\eqref{eq:combi_2} that the condition $\y_1\neq \y_2$ is equivalent to  $({\color{red}x_{\del+1}^{j+1}},{\color{red}\cdots},{\color{red}x_{3\del}^{j+1}})\neq ({\color{blue}x_{\ell-\del+1}^{j}},{\color{blue}\cdots},{\color{blue}x_{\del}^{j+1}})$. This additional constraint that must be imposed on the $n/\ell-2$ blocks $\x^2,\ldots,\x^{n/\ell-1}$ introduces an additional redundancy in each of these blocks of value $\varepsilon = 2\del - \log (2^{2\del} -1)$, where $0<\varepsilon<1$. Consequently, we obtain the bound in~\eqref{eq:bound2} which concludes the proof of Theorem~\ref{thm2}. 
\else
Given the aforementioned constraints, we show in~\cite{KHB21} that for $n/\ell \geq 3$, the bound in~\eqref{eq:bound1} is not achievable, i.e., a redundancy of exactly $2\del(n/\ell -1)$ bits is not sufficient for decoding. The proof follows from finding two deletion combinations that result in the same string $\y$ if no further constraints are imposed on $\x$. The additional constraint that must be imposed on the $n/\ell-2$ blocks $\x^2,\ldots,\x^{n/\ell-1}$ introduces an additional redundancy in each of these blocks of value $\varepsilon = 2\del - \log (2^{2\del} -1)$, where $0<\varepsilon<1$. Consequently, we obtain the bound in~\eqref{eq:bound2}.
\fi

Next, we specifically consider block-by-block decodable codes and show that for such codes, in addition to the constraints in~\eqref{eq:condition_2}, the following must hold
\begin{equation}
\label{eq:condition_3}
x_{\del+1}^{j+1} \neq x_{i_1}^j, \forall \ i_1\in [\ell-\del+1,\ell] \text{ and } j\in [n/\ell -1].
\end{equation}
Namely, the constraint in~\eqref{eq:condition_3} extends the necessary conditions in~\eqref{eq:condition_2} to $i_2 \in [\del+1]$. We know from~\eqref{eq:condition_2} that the bits $x_{\ell-\del+1}^j,\ldots,x_{\ell}^j$ have the same values for all $j\in [n/\ell -1]$. Without loss of generality, assume that $\x^j_{[\ell-\delta+1,\ell]}=\mathbf{1}^{\del}$, and suppose that $x_{\del+1}^{j+1}=1$ has the same value as these bits. Consider the following two deletion combinations. In the first one, only one out of the last $\del$ bits is deleted in $\x^j$, and the first $\del$ bits are deleted in $\x^{j+1}$. In the second one, no bits are deleted in $\x^j$. Let $\alpha_j$ be the starting position of block $j$ in $\y$, and assume that the value of $\alpha_j$ is known at the decoder. For both combinations, we have $$\y_{[\alpha_j,\alpha_j+\ell-1]}=\langle \x^j_{[\ell-\del]},\mathbf{1}^{\del}\rangle.$$ Hence, a decoder cannot determine the exact number of deletions in $\x^j$ (0 or 1) by only processing the $\ell$ bits in $\y_{[\alpha_j,\alpha_j+\ell-1]}$. This contradicts the definition of block-by-block decodable codes given in Definition~\ref{def3}. Therefore, the condition in~\eqref{eq:condition_3} is necessary for all block-by-block decodable codes. This condition introduces an additional redundancy of $1$ bit in each of the $n/\ell-1$ blocks $\x^1,\ldots,\x^{n/\ell-1}$. Thus, to conclude the proof of Theorem~\ref{thm3}, we add $n/\ell -1$ to the RHS in~\eqref{eq:bound1} and obtain the bound in~\eqref{eq:bound3}.

\section{Detecting Insertions}\label{sec:detecting_ins}
In this section, we study the case where a string is affected by insertions only. We introduce two codes $\mathcal{I}_{1}(\ell,n)$ and $\mathcal{I}_{2}(\ell,n)$ that can detect up to $1$ and up to $2$ insertions per block, respectively. The problem of constructing codes that can detect up to $\ins> 2$ insertions is more involved. This will be already clear from the decoder of $\mathcal{I}_{2}(\ell,n)$ presented in Section~\ref{sec:detecting_ins}-C.

\subsection{Insertion Model}
\label{insmodel}
Consider $\mathbf{x}=\langle \mathbf{x}^1,\mathbf{x}^2,\ldots,\mathbf{x}^{n/\ell}\rangle \in \mathbb{F}_2^n$ that is affected by at most $\ins$ insertions in each block resulting in $\y \in \mathbb{F}_2^*$. In this section, we are interested in constructing codes that can detect up to $\ins$ insertions in each of the blocks  $\mathbf{x}^1,\ldots,\mathbf{x}^{n/\ell}$ of size $\ell$. We say that an insertion has occurred in block $j$, as opposed to block $j+1$, if the inserted bit appears before $x^j_{\ell}$ in $\mathbf{y}$, for all $j \in [n/\ell-1]$. Furthermore, unlike the case of deletions, some insertion combinations cannot be distinguished by any code that detects insertions. This situation arises in particular cases where different insertion combinations generate the same string at the block boundaries.
\iflong
For instance, consider the following two insertion combinations for a given $j\in [n/\ell-2]$. In the first one, $x^j_{\ell}$ is inserted at position $\ell$ in $\x^j$, $x^{j+1}_{\ell}$ is inserted at position $\ell$ in $\x^{j+1}$, and no bits are inserted in the other blocks. In the second one, $x^j_{\ell}$ is inserted at position $1$ in $\x^{j+1}$, $x^{j+1}_{\ell}$ is inserted at position $1$ in $\x^{j+2}$, and no bits are inserted in the other blocks. For both combinations, we obtain
\begin{equation}
\label{eq:conv}
\y = (\cdots,x_1^j,\cdots,x_{\ell}^j,x_{\ell}^j,x_1^{j+1},\cdots \cdots, x_{\ell}^{j+1},x_{\ell}^{j+1},x_1^{j+2}\cdots).
\end{equation}
Therefore, in such particular cases, no code can detect whether the first insertion actually occurred in block $j$ or block $j+1$. To this end, we adopt the following decoding convention. Consider an example where a string $100$ is affected by $2$ insertions resulting in $10100$. Notice that there are multiple insertion combinations that could have generated $10100$ from $100$. For instance, the bits $10$ could have been inserted at the beginning (i.e., $\underline{1}\underline{0}100$), or $01$ could have been inserted in the second position (i.e., $1\underline{0}\underline{1}00$), etc.
\fi
\iflong
For such cases, our decoding convention is to assume that the actual insertion combination is the one that occurs in the leftmost position in the string. In the previous example, we assume that $10$ was inserted at the beginning. If we apply this convention to the case discussed in~\eqref{eq:conv}, then the decoder would declare that one insertion occurred in each of $\x^j$ and $\x^{j+1}$, and no insertions occurred in other blocks.
\else
For such cases, our decoding convention is to assume that the actual insertion combination is the one that occurs in the leftmost position in the string. More details on this decoding convention are given in~\cite{KHB21}.
\fi

\subsection{One Insertion}
Theorem~\ref{thmins1} shows that the code $\mathcal{I}_{1}(\ell,n)$ (defined in Construction~\ref{cons2}) detects up to $1$ insertion in each block, with redundancy $2(n/\ell-1)$.
\iflong
\else
The proof of the theorem can be found in~\cite{KHB21}.
\fi
\begin{construction}[Code detecting up to $1$ insertion]
\label{cons2}
For $\ell,n\in \mathbb{Z}^+$, with $2 < \ell \leq n/2$, we define 
\begin{align*}
\mathcal{B}_1^0(\ell) &\triangleq \big\{\mathbf{x}\in \mathbb{F}_2^{\ell}~\big|~x_1=0\big\}, \\
\mathcal{B}_1^1(\ell) &\triangleq \big\{\mathbf{x}\in \mathbb{F}_2^{\ell}~\big|~x_{\ell}=1\big\}.
\end{align*}
The code $\mathcal{I}_1(\ell,n)$ is defined as the set
\begin{equation*}
\left\{
  \langle \mathbf{x}^1,\ldots,\mathbf{x}^{n/\ell}\rangle \;\middle|\;
  \begin{aligned}
  & \mathbf{x}^1\in \mathcal{B}_{1}^1(\ell),\\
  & \mathbf{x}^j \in \mathcal{B}_{1}^1(\ell) \cap \mathcal{B}_{1}^0(\ell), \forall j \in [2, \frac{n}{\ell}-1], \\
  & \mathbf{x}^{n/\ell} \in \mathcal{B}_{0}^0(\ell).
  \end{aligned}
\right\}.
\end{equation*}
\end{construction}

\begin{theorem}
\label{thmins1}
For $\ell,n\in \mathbb{Z}^+$, with $2 < \ell \leq n/2$, consider a codeword $\mathbf{x}=\langle \mathbf{x}^1,\mathbf{x}^2,\ldots,\mathbf{x}^{n/\ell}\rangle \in \mathcal{I}_1(\ell,n)$ that is affected by at most $1$ insertion in each of its blocks $\mathbf{x}^1,\ldots,\mathbf{x}^{n/\ell}$. The code $\mathcal{I}_1(\ell,n)$ detects up to $1$ insertion per block.
This code is encodable and block-by-block decodable in linear time $\cO(n)$, and its redundancy is $2(n/\ell-1)$.
\end{theorem}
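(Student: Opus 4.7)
My plan is to mirror the structure of the proof of Theorem~\ref{thm1}. The redundancy $2(n/\ell-1)$ is immediate from Construction~\ref{cons2}: one bit is fixed in $\x^1$ (namely $x^1_\ell=1$), two bits are fixed in each interior block $\x^j$ for $j\in[2,n/\ell-1]$ (namely $x^j_1=0$ and $x^j_\ell=1$), and one bit is fixed in $\x^{n/\ell}$ (namely $x^{n/\ell}_1=0$), giving $1+2(n/\ell-2)+1=2(n/\ell-1)$. Encoding simply places the message bits into the unconstrained positions and runs in $\cO(n)$.

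Next I would describe a block-by-block decoder and prove, by induction on the block index, that the starting positions $\alpha_j$ of each block in $\y$ are recovered correctly (the base case $\alpha_1=1$ being trivial). For $j<n/\ell$, the decoder inspects the single bit $\y_{\alpha_j+\ell}$, which lies in the window $[\alpha_j,\alpha_j+\ell]$ of length $\ell'=\ell+1$ permitted by Definition~\ref{def3}. The rule is: if $\y_{\alpha_j+\ell}=0$, declare zero insertions in $\x^j$ and set $\alpha_{j+1}=\alpha_j+\ell$; if $\y_{\alpha_j+\ell}=1$, declare one insertion in $\x^j$ and set $\alpha_{j+1}=\alpha_j+\ell+1$. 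Once $\alpha_{n/\ell}$ is known, the insertion count in the final block is recovered from the residual length $|\y|-\alpha_{n/\ell}+1\in\{\ell,\ell+1\}$. The total work is linear in $n$.

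The correctness argument is a case analysis at the boundary between $\x^j$ and $\x^{j+1}$, exploiting that $x^j_\ell=1$ and $x^{j+1}_1=0$. If no bit is inserted in $\x^j$, then $x^j_\ell=1$ occupies position $\alpha_j+\ell-1$ and $\y_{\alpha_j+\ell}$ is the first surviving bit of $\x^{j+1}$; this equals $x^{j+1}_1=0$ whenever $\x^{j+1}$ has no insertion, an insertion at a position $k>1$, or an insertion of a $0$ at position $1$ (since inserting a $0$ before $x^{j+1}_1=0$ gives the same string as leaving it alone at that position). The only configuration producing $\y_{\alpha_j+\ell}=1$ is an insertion of a $1$ at position $1$ of $\x^{j+1}$, but this yields a string bitwise identical to that obtained by inserting a $1$ immediately before $x^j_\ell$ in $\x^j$ with $\x^{j+1}$ untouched, and the leftmost-insertion convention from Section~\ref{insmodel} assigns the shared preimage to $\x^j$. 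Conversely, if $\x^j$ undergoes one insertion, the definition in Section~\ref{insmodel} forces the inserted bit to appear strictly before $x^j_\ell$ in $\y$, so $x^j_\ell=1$ shifts to position $\alpha_j+\ell$, yielding $\y_{\alpha_j+\ell}=1$ regardless of what happens in $\x^{j+1}$. Combining these cases under the convention, $\y_{\alpha_j+\ell}=0$ iff zero insertions occurred in $\x^j$, which validates the rule and, by induction, propagates the correct $\alpha_{j+1}$.

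The only subtle point I foresee is the inherent ambiguity created by inserting a $1$ adjacent to $x^j_\ell=1$: placing it immediately before or immediately after $x^j_\ell$ produces identical strings $\y$, so no decoder can distinguish them from $\y$ alone. The leftmost-insertion convention resolves this tie by attributing the inserted bit to $\x^j$, and the crux of the proof is checking that the simple rule ``output $1$ insertion in $\x^j$ iff $\y_{\alpha_j+\ell}=1$'' is consistent with this canonical choice in every sub-case above.
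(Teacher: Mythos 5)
Your proposal is correct and follows essentially the same route as the paper's proof: count the fixed positions for the redundancy, decode block by block by inspecting a single bit at the block boundary, and invoke the leftmost-insertion convention of Section~\ref{insmodel} to resolve the ambiguity of a $1$ inserted adjacent to $x^j_\ell=1$. The only divergence is that you test $y_{\alpha_j+\ell}$ while the paper writes $y_{\alpha_j+\ell+1}$; your index is the one consistent with $\alpha_j$ being the first position of block $j$ (and with the paper's own decoders for Constructions~\ref{cons1} and~\ref{cons3}), so this is a harmless off-by-one in the paper's text rather than a gap in your argument.
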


\iflong
\begin{proof}
The complexity and redundancy arguments are similar to the ones in the proof of Theorem~\ref{thm1}. Next, we describe the decoding algorithm and prove its correctness. Consider a codeword $\mathbf{x} \in \cI_{1}(\ell,n)$ that is affected by at most $1$ insertion  per block, resulting in~$\mathbf{y}\in \mathbb{F}_2^*$. The input of the decoder is~$\mathbf{y}$, and the output is the number of insertions in each block~(either $0$ or $1$). 

The decoding is done a block-by-block basis. Assume that the first $j-1$ blocks were decoded correctly, so the decoder knows the starting position of block $j$ in $\mathbf{y}$, denoted by $\alpha_j$. To decode block $j$, the decoder examines bit $y_{\alpha_j+\ell+1}$. If $y_{\alpha_j+\ell+1}=0$, then the decoder declares that no insertions have occurred, and sets the starting position of block $j+1$ to $\alpha_{j+1}=\alpha_j+\ell$. Else, if $y_{\alpha_j+\ell+1}=1$, then the decoder declares that one insertion has occurred, and sets the starting position of block $j+1$ to $\alpha_{j+1}=\alpha_j+\ell+1$. The decoder repeats this process for the first $n/\ell-1$ blocks, and the number of insertions in the last block is deduced from $\alpha_{n/\ell}$ and the number of bits in $\mathbf{y}$ that are yet to be decoded.

It follows from Construction~\ref{cons2}, and the insertion model described in Section~\ref{insmodel}, that: 
\begin{enumerate*}[label={\textit{(\roman*)}}]
\item $y_{\alpha_j+\ell+1}=0$ only if no bits were inserted  $\x^j$; and \item $y_{\alpha_j+\ell+1}=1$ only if a single bit was inserted in $\x^j$ or the bit $1$ was inserted at the beginning of $\x^{j+1}$. \end{enumerate*}
However, based on our decoding convention described previously, a single insertion is declared in block $j$ for both cases. 
\end{proof}
\else
\fi

\subsection{Two Insertions}
Interestingly, the problem becomes significantly more complex for $\ins\geq 2$ insertions. Next, we present a code construction for $\ins=2$. Theorem~\ref{thmins2} shows that the code $\mathcal{I}_2(\ell,n)$ (defined in Construction~\ref{cons3}) detects up to $2$ insertions per block with redundancy $8n/\ell -5$.
\iflong
\else
The proof of the theorem can be found in~\cite{KHB21}.
\fi

\begin{construction}[Code detecting up to $2$ insertions]
\label{cons3}
For $\ell,n\in \mathbb{Z}^+$, with $8 < \ell \leq n/2$, we define
\begin{align*}
\mathcal{B}_2^0(\ell) &\triangleq  \big\{\mathbf{x}\in \mathbb{F}_2^{\ell}~\big|~\mathbf{x}_{[1,5]}=\langle\mathbf{0}^2,\mathbf{1}^3\rangle\big\}, \\
\mathcal{B}_2^1(\ell) &\triangleq  \big\{\mathbf{x}\in \mathbb{F}_2^{\ell}~\big|~\mathbf{x}_{[\ell-2,\ell]}=\langle 0,\mathbf{1}^2 \rangle\big\}.
\end{align*}
The code $\mathcal{I}_2(\ell,n)$ is defined as the set
\begin{equation*}
\left\{
  \langle \mathbf{x}^1,\ldots,\mathbf{x}^{n/\ell}\rangle \;\middle|\;
  \begin{aligned}
  & \mathbf{x}^1\in \mathcal{B}_{2}^1(\ell),\\
  & \mathbf{x}^j \in \mathcal{B}_{2}^1(\ell) \cap \mathcal{B}_{2}^0(\ell), \forall j \in [2, \frac{n}{\ell}].
  \end{aligned}
\right\}.
\end{equation*}
\end{construction}

\begin{theorem}
\label{thmins2}
For $\ell,n\in \mathbb{Z}^+$, with $8 < \ell \leq n/2$, consider a codeword $\mathbf{x}=\langle \mathbf{x}^1,\mathbf{x}^2,\ldots,\mathbf{x}^{n/\ell}\rangle \in \mathcal{I}_2(\ell,n)$ that is affected by at most $2$ insertions in each of its blocks $\mathbf{x}^1,\ldots,\mathbf{x}^{n/\ell}$. The code $\mathcal{I}_2(\ell,n)$ detects up to $2$ insertions per block. The code is encodable and decodable in linear time~$\cO(n)$, and its redundancy is $8n/\ell-5$ bits.
\end{theorem}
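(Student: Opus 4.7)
The plan is to reduce Theorem~\ref{thmins2} to three ingredients, of which only the third is substantial.

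The redundancy count is immediate from Construction~\ref{cons3}: block $\mathbf{x}^1$ fixes only the 3-bit suffix $\langle 0,\mathbf{1}^2\rangle$, while each of the remaining $n/\ell-1$ blocks fixes both the 5-bit prefix $\langle\mathbf{0}^2,\mathbf{1}^3\rangle$ and the 3-bit suffix, giving $3+8(n/\ell-1)=8n/\ell-5$ fixed bits. Encoding in $\mathcal{O}(n)$ time is trivial: place the information bits in the unconstrained positions.

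For decoding I would proceed as in Theorem~\ref{thmins1}, processing $\mathbf{y}$ left to right, block by block, and maintaining the starting position $\alpha_j$ of $\mathbf{x}^j$ in $\mathbf{y}$. For each $j<n/\ell$ the decoder would determine $\ins_j\in\{0,1,2\}$ by inspecting a constant-size window of $\mathbf{y}$ around position $\alpha_j+\ell$, then update $\alpha_{j+1}=\alpha_j+\ell+\ins_j$ and proceed; $\ins_{n/\ell}$ is obtained from the residual length of $\mathbf{y}$. The structural fact driving the decoder is that the concatenation of the suffix $011$ of $\mathbf{x}^j$ and the prefix $00111$ of $\mathbf{x}^{j+1}$ yields the 8-bit marker $01100111$ at every internal boundary; since $x_\ell^j=1$ always sits at position $\alpha_j+\ell+\ins_j-1$ of $\mathbf{y}$, locating this last $1$ of the marker identifies $\ins_j$. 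Following the leftmost decoding convention of Section~\ref{insmodel}, the decoder would greedily test the hypotheses $\ins_j=2$, $\ins_j=1$, $\ins_j=0$ in that order and commit to the largest value consistent with the local pattern, so that a $1$ that could plausibly be $x_\ell^j$ is attributed to $\mathbf{x}^j$ rather than read as a $1$ inserted at the start of $\mathbf{x}^{j+1}$.

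Correctness then reduces to a finite case analysis over the nine pairs $(\ins_j,\ins_{j+1})\in\{0,1,2\}^2$ together with the placement of the inserted bits inside each block. For each case I would tabulate the bits appearing in the boundary window of $\mathbf{y}$ and check that the greedy rule returns the $\ins_j$ prescribed by the leftmost convention. I expect the main obstacle to be the boundary ambiguities already flagged in Section~\ref{insmodel}: $1$s placed just after $x_\ell^j$ in $\mathbf{y}$ can always be reinterpreted as $1$s placed just before $x_\ell^j$, and $0$s inserted at the start of $\mathbf{x}^{j+1}$ can blend into the prefix. This is precisely where the asymmetric marker pays off---the two leading $0$s and three trailing $1$s of the prefix (rather than the two trailing $1$s that sufficed in Construction~\ref{cons2}) provide enough contrast so that, for every combination of up to two insertions per block and every placement of those insertions, the local pattern is disambiguated by the greedy rule. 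Verifying this exhaustively over the nine cases is where the bulk of the work will lie.
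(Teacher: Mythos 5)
Your redundancy count and the encoding argument match the paper and are fine. The gap is in the decoding plan, specifically in the claim that $\ins_j$ can be read off from a \emph{constant-size} window of $\mathbf{y}$ around position $\alpha_j+\ell$, resolved by greedily preferring the largest consistent $\ins_j$. The troublesome boundary pattern is $(y_{\alpha_j+\ell},y_{\alpha_j+\ell+1})=(0,1)$: it arises both when two bits are inserted in $\mathbf{x}^j$ with a $0$ landing at the end (so $\ins_j=2$ and $y_{\alpha_j+\ell+1}=x^j_\ell$), and when nothing is inserted in $\mathbf{x}^j$ and a $1$ is inserted at position $2$ of $\mathbf{x}^{j+1}$ (so $\ins_j=0$). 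These hypotheses differ by two in $\ins_j$, so this is not a tie the leftmost convention can absorb: in many instances the two explanations lead to genuinely different strings and the decoder must pick the right one. Moreover, one can arrange the insertions in $\mathbf{x}^{j+1}$ (e.g.\ a $1$ inserted at position $2$ and a $0$ inserted shortly after) so that the two hypotheses produce \emph{identical} bits throughout a window of length on the order of $\ell$ after the boundary; hence no constant-size window, and no greedy rule applied to it, can decide between $\ins_j=2$ and $\ins_j=0$. Committing to the largest value first would output $\ins_j=2$ in cases where the truth is $\ins_j=0$ and the strings are distinguishable further downstream.

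The paper's proof handles exactly this case with a two-stage test: it first counts the zeros $z_j$ in $\y_{[\alpha_j+\ell+2,\alpha_j+\ell+5]}$ (exploiting that the two leading zeros of the prefix $\langle\mathbf{0}^2,\mathbf{1}^3\rangle$ must both appear among the first four received bits of $\mathbf{x}^{j+1}$), which settles the matter when $z_j\neq 2$; and in the residual ambiguous sub-case it inspects $y_{\alpha_j+2\ell+1}$ and $y_{\alpha_j+2\ell+2}$, i.e.\ it looks ahead up to $\ell$ bits into block $j+1$ to see where that block ends. This non-constant lookahead is essential, and it is precisely why Theorem~\ref{thmins2}, unlike Theorems~\ref{thm1} and~\ref{thmins1}, claims only linear-time decodability and not block-by-block decodability in the sense of Definition~\ref{def3}. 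Your finite case analysis over the nine pairs $(\ins_j,\ins_{j+1})$ is the right shape for the verification, but it must be carried out for a decoder that is allowed this deeper lookahead; as specified, your decoder would fail on the $(0,1)$ boundary pattern.
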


{\em Decoding:} Consider a codeword $\mathbf{x}\in \cI_{2}(\ell,n)$ (with \mbox{$8< \ell \leq n/2$}) that is affected by at most $2$ insertions in each block, resulting in $\mathbf{y}\in \mathbb{F}_2^*$. The input of the decoder is $\mathbf{y}$, and the output is \mbox{$(\ins_1,\ins_2,\ldots,\ins_{n/\ell})\in \mathbb{Z}_{3}^{n/\ell}$}, where $\ins_j$ denotes the number of insertions that have occurred in block $j$. The decoding is done on a block-by-block basis. The blocks are decoded sequentially from left to right, with a small amount of lookahead that is at most $\ell$ bits of the next block. The process through which the decoder determines its output $\ins_j$ for a block $j\in [n/\ell-1]$ is illustrated in Figure~\ref{fig:2ins}. As for the last block, the decoder determines $\ins_{n/\ell}$ based on the starting position of the last block $\alpha_{n/\ell}$ and the number of bits in $\mathbf{y}$ that are yet to be decoded.
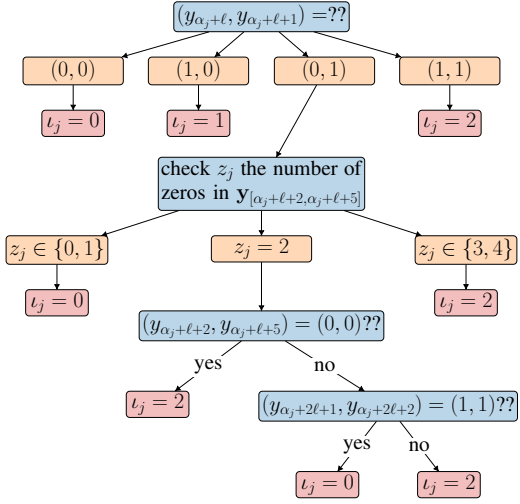
\begin{figure}
    \centering
    \resizebox{0.4\textwidth}{!}{
    \definecolor{color0}{rgb}{0.12156862745098,0.466666666666667,0.705882352941177}
\definecolor{color1}{rgb}{1,0.498039215686275,0.0549019607843137}
\definecolor{color2}{rgb}{0.172549019607843,0.627450980392157,0.172549019607843}
\definecolor{color3}{rgb}{0.83921568627451,0.152941176470588,0.156862745098039}
\definecolor{color4}{rgb}{0.580392156862745,0.403921568627451,0.741176470588235}
\definecolor{color6}{rgb}{0.549019607843137,0.337254901960784,0.294117647058824}
\definecolor{color5}{rgb}{0.890196078431372,0.466666666666667,0.76078431372549}

\tikzset{%
  >={Latex[width=2mm,length=2mm]},
            base/.style = {rectangle, rounded corners, draw=black,
                           minimum width=4cm, minimum height=1cm,
                           text centered, font=\Huge},
  activityStarts/.style = {base, fill=color0!30},
    activityRuns/.style = {base, fill=color1!30},
         process/.style = {base, minimum width=2.5cm, fill=color3!30,
                           font=\Huge},
         lemma/.style = { minimum width=2.5cm, ,
                           font=\huge},
}

\begin{tikzpicture}[node distance=5.5cm, 
    every node/.style={fill=white, font=\Large }, align=center]
  \node (start)             [activityStarts]              {$(y_{\alpha_j+\ell},y_{\alpha_j+\ell+1})=$??};
  \node (zz)     [activityRuns, below left = 1 cm and 2 cm of start]          {$(0,0)$};
  \node (oz)     [activityRuns, below left = 1 cm and -3 cm of start]          {$(1,0)$};
  \node (zo)     [activityRuns, below right = 1 cm and -3 cm of start]          {$(0,1)$};
    \node(oo)     [activityRuns, below right = 1 cm and 2 cm of start]          {$(1,1)$};

  \node (res1)     [process, below = 1 cm of zz]   {$\ins_j = 0$};
  \node (res2)     [process, below = 1 cm of oz]   {$\ins_j = 1$};
  \node (res3)     [process, below = 1 cm of oo]   {$\ins_j = 2$};
  \node (zj)     [activityStarts, below = 5 cm of start]          {\begin{varwidth}{8cm}check $z_j$ the number of zeros in $\y_{[\alpha_j+\ell+2,\alpha_j+\ell+5]}$\end{varwidth}};
  
  \node (01) [activityRuns, below left = 1 cm and 2 cm of zj]   {$z_j \in \{0,1\}$};
  \node (2) [activityRuns,  below = 1 cm  of zj]   {$z_j = 2$};
  \node (34) [activityRuns,below right = 1 cm and 2 cm of zj]   {$z_j \in \{3,4\}$};
  
  \node (res4)  [process, below = 1 cm of 01]   {$\ins_j = 0$};
  \node (res5)  [process, below = 1 cm of 34]   {$\ins_j = 2$};
  
    \node (zj2)     [activityStarts, below = 4 cm of zj]          {\begin{varwidth}{10cm}$(y_{\alpha_j+\ell+2},y_{\alpha_j+\ell+5})= (0,0)$??\end{varwidth}};
    
  \node (res6) [process, below left = 2 cm and -2 cm of zj2]   {$\ins_j = 2$};
  
  \node (no1) [activityStarts, below right = 2 cm and -5 cm of zj2]   {\begin{varwidth}{12cm}$(y_{\alpha_j+2\ell+1},y_{\alpha_j+2\ell+2})= (1,1)$??\end{varwidth}};
  
  \node (res7) [process, below left = 2 cm and -4 cm of no1]   {$\ins_j = 0$};
  \node (res8) [process, below right = 2 cm and -4 cm of no1]   {$\ins_j = 2$};
  
  \path[->]
  (start) edge (zz)
          edge (oz)
          edge (zo)
          edge (oo)
  (zz)    edge (res1)
  (oz)    edge (res2)
  (oo)    edge (res3)
  (zo)    edge (zj)
  (zj)    edge (01)
          edge (2)
          edge (34)
  (01)    edge (res4)
  (2)     edge (zj2)
  (34)    edge (res5)
  (zj2)   edge node[font=\Huge] {yes} (res6)
  (zj2)   edge node[font=\Huge] {no} (no1)
  (no1)   edge node[font=\Huge] {yes} (res7)
  (no1)   edge node[font=\Huge] {no} (res8)
        ;

\end{tikzpicture}
    }
    \caption{Decoding algorithm of $\mathcal{I}_2(\ell,n)$ that outputs the number of insertions in block $j\in [n/\ell-1]$ denoted by $\ins_j$, based on the input string $\y$. After decoding block $j$, the decoder sets the starting position of block $j+1$ in $\y$ to $\alpha_{j+1}=\alpha_j+\ell+\ins_j$, where $\alpha_1=1$, and proceeds as depicted in the figure.}
    \label{fig:2ins}
\end{figure}
\iflong
\begin{proof}[Proof of Theorem~\ref{thmins2}]
The encoding complexity and redundancy arguments are similar to the ones in the proof of Theorem~\ref{thm1}. Decoding a given block involves a single pass over $\cO(\ell)$ bits with constant time operations. Therefore, the decoding complexity is $\cO(n)$. Next, we prove the correctness of the decoding algorithm.

Given that at most $2$ insertions occur in each block, it follows from Construction~\ref{cons3}, and the insertion model described in Section~\ref{insmodel}, that: 
\begin{enumerate}[leftmargin=*]
\item $(y_{\alpha_j+\ell},y_{\alpha_j+\ell+1})=(0,0)$ only if no bits were inserted in~$\x^j$.
\item $(y_{\alpha_j+\ell},y_{\alpha_j+\ell+1})=(1,0)$ only if a single bit was inserted in $\x^j$ or the bit $1$ was inserted at the beginning of block $\x^{j+1}$. Based on our decoding convention, a single insertion is declared in $\x^j$ for both cases.
\item $(y_{\alpha_j+\ell},y_{\alpha_j+\ell+1})=(1,1)$ only if $2$ bits were inserted in $\x^{j}$ or the bits $11$ were inserted at the beginning of $\x^{j+1}$. Based on our decoding convention, an insertion is declared in $\x^j$ for both cases.
\item $(y_{\alpha_j+\ell},y_{\alpha_j+\ell+1})=(0,1)$ could either indicate that \begin{enumerate*}[label={\textit{(\roman*)}}] \item $2$~bits were inserted in $\x^j$, where one of these bits is a $0$ inserted at position $\ell$ in $\x^j$; or \item no bits were inserted in $\x^j$ and the bit $1$ was inserted in position $2$ in $\x^{j+1}$. \end{enumerate*} Assume that case \textit{(i)} is the correct one and thus $\alpha_{j+1}=\alpha_j+\ell+2$ is the starting position of block $j+1$ in~$\y$. Given that we consider at most $2$ insertions in any block and that $\x^{j+1}_{[1,5]}=\langle\mathbf{0}^2,\mathbf{1}^3\rangle$ by construction, then we expect to see at least $2$ zeros in $(y_{\alpha_j+\ell+2},\ldots,y_{\alpha_j+\ell+5})$ if our assumption is correct. On the other hand, if our assumption is wrong, then a $0$ was wrongfully removed from the start of block $\x^{j+1}$ and we expect to see at most $2$ zeros in the aforementioned string, where the second zero could be a result of an additional insertion. Let $z_j$ be the number of zeros in $(y_{\alpha_j+\ell+2},\ldots,y_{\alpha_j+\ell+5})$. Based on the discussion above, we now know that: \begin{enumerate*}[label={\textit{(\alph*)}}] 
\item $z_j\in \{0,1\}$ only if case \textit{(i)} is true; and \item $z_j\in \{3,4\}$ only if case \textit{(ii)} is true. 
\end{enumerate*} If $z_j=2$ and $(y_{\alpha_j+\ell+2},y_{\alpha_j+\ell+3})=(0,0)$, then case \textit{(i)} is assumed by convention. In all other cases where $z_j=2$, the assumption that case \textit{(i)} is correct could be wrong since we could be dealing with case \textit{(ii)} with an additional insertion of a $0$ at $\alpha_j+\ell+2$ or $\alpha_j+\ell+3$. One can verify from the construction that case \textit{(ii)} is true with $z_j=2$ only if $(y_{\alpha_j+2\ell+1},y_{\alpha_j+2\ell+2})=(1,1)$. Furthermore, case \textit{(i)} is true with $z_j=2$ only if  $(y_{\alpha_j+2\ell+1},y_{\alpha_j+2\ell+2})\neq (1,1)$.
All cases not mentioned in the discussion above correspond to instances where the decoding convention is applied.
\end{enumerate}
\end{proof}
\else
\fi

\iflong
\section{Detecting Deletions or Insertions}\label{sec:mix}
\iflong
In this section, we introduce a code $\cC_1(\ell,n)$ that can detect up to $\tau=1$ one error in each block, where the error could be either a deletion or an insertion.

\subsection{Error Model}
\label{insdelmodel}
Consider $\mathbf{x}=\langle \mathbf{x}^1,\mathbf{x}^2,\ldots,\mathbf{x}^{n/\ell}\rangle \in \mathbb{F}_2^n$ that is affected by at most $\tau=1$ error in each block, resulting in $\y \in \mathbb{F}_2^*$.  For all $j \in [n/\ell-1]$, we say that an error has occurred in block $j$, as opposed to block $j+1$, if: \begin{enumerate*}[label={\textit{(\roman*)}}]  
\item a bit is deleted in $\mathbf{x}^j$; or \item a bit is inserted and appears before $x^j_{\ell}$ in $\mathbf{y}$.
\end{enumerate*}
For decoding, we adopt the convention that was explained for the case of insertions in Section~\ref{insmodel}. Furthermore, there are two particular deletion/insertion combinations that cannot be distinguished by any code that follows Definition~\ref{def2}, and whose purpose is to detect a single error per block. These combinations result in a transposition error at the block boundaries, where the positions of the last bit in $\mathbf{x}^j$ and the first bit in $\mathbf{x}^{j+1}$ are swapped. In fact, a transposition error which transforms $ab \to ba$ could be either a result of \begin{enumerate*}[label={\textit{(\roman*)}}] \item deleting $a$ from $ab$ and then inserting $a$ at the end; or \item inserting $b$ at the beginning of $ab$ and then deleting the $b$ at the end. \end{enumerate*}
If such a transposition error occurs at the block boundaries, a decoder cannot detect whether the deletion occurred in block $j$ and the insertion in block $j+1$, or the other way around. Namely, for both combinations we obtain
For both combinations, we obtain
\begin{equation}
\label{eq:conv1}
\y = (\cdots \cdots,x_1^j, \cdots, x_1^{j+1},x_{\ell}^j, \cdots, x_{\ell}^{j+1},\cdots \cdots).
\end{equation}
For this reason, we adopt a decoding convention which assumes that a transposition error $ab\to ba$ occurs by inserting $b$ at the beginning of and deleting the $b$ at the end. If we apply this convention to the case discussed in~\eqref{eq:conv1}, then the decoder would declare that one insertion occurred in $\x^j$ and one deletion occurred in $\x^{j+1}$, i.e., assuming no errors have occurred in the other blocks, the output of the decoder based on Definition~\ref{def2} would be $$ \boldsymbol{\del}= (0,\ldots,\underbrace{0}_{j},\underbrace{1}_{j+1},0,\ldots,0),$$
$$\boldsymbol{\ins}= (0,\ldots,\underbrace{1}_{j},\underbrace{0}_{j+1},0,\ldots,0).$$

\subsection{Code Construction}
Theorem~\ref{thmmix} shows that the code $\mathcal{C}_{1}(\ell,n)$ (defined in Construction~\ref{consmix}) detects up to $1$ error in each block, with redundancy $6(n/\ell-1)$. 
\begin{construction}[Code detecting up to $1$ error]
\label{consmix}
For $\ell,n\in \mathbb{Z}^+$, with $6 < \ell \leq n/2$, we define
\begin{align*}
\mathcal{G}^0(\ell) &\triangleq  \big\{\mathbf{x}\in \mathbb{F}_2^{\ell}~\big|~\mathbf{x}_{[1,3]}= \mathbf{0}^3\big\}, \\
\mathcal{G}^1(\ell) &\triangleq  \big\{\mathbf{x}\in \mathbb{F}_2^{\ell}~\big|~\mathbf{x}_{[\ell-2,\ell]}=\langle 0,\mathbf{1}^2 \rangle\big\}.
\end{align*}
The code $\mathcal{C}_1(\ell,n)$ is defined as the set
\begin{equation*}
\left\{
  \langle \mathbf{x}^1,\ldots,\mathbf{x}^{n/\ell}\rangle \;\middle|\;
  \begin{aligned}
  & \mathbf{x}^1\in \mathcal{G}^1(\ell),\\
  & \mathbf{x}^j \in \mathcal{G}^1(\ell) \cap \mathcal{G}^0(\ell), \forall j \in [2, \frac{n}{\ell}-1], \\
  & \mathbf{x}^{n/\ell} \in \mathcal{G}^0(\ell).
  \end{aligned}
\right\}.
\end{equation*}
\end{construction}

\begin{theorem}
\label{thmmix}
For $\ell,n\in \mathbb{Z}^+$, with $6 < \ell \leq n/2$, consider a codeword $\mathbf{x}=\langle \mathbf{x}^1,\mathbf{x}^2,\ldots,\mathbf{x}^{n/\ell}\rangle \in \mathcal{C}_1(\ell,n)$ that is affected by at most $1$ error in each of its blocks $\mathbf{x}^1,\ldots,\mathbf{x}^{n/\ell}$, where the error could be either a deletion or an insertion. The code $\mathcal{C}_1(\ell,n)$ detects up to $1$ error per block. The code is encodable and block-by-block decodable in linear time~$\cO(n)$, and its redundancy is $6(n/\ell-1)$ bits.
\end{theorem}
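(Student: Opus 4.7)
The plan is to follow the template established by Theorems~\ref{thmins1} and~\ref{thmins2}. The redundancy claim $6(n/\ell-1)$ is an immediate bookkeeping step from Construction~\ref{consmix}: $\mathbf{x}^1$ contributes the three fixed trailing bits $011$, $\mathbf{x}^{n/\ell}$ contributes the three fixed leading bits $000$, and each of the $n/\ell-2$ interior blocks contributes six fixed bits, three at each end. Linear-time encoding writes the information bits into the unconstrained positions, and the block-by-block decoder described below performs $\cO(1)$ work per block, so the total complexity is $\cO(n)$.

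The heart of the proof is a decoder that proceeds block-by-block while maintaining as an inductive invariant the correct starting position $\alpha_j$ of block $j$ in $\mathbf{y}$, with $\alpha_1=1$. For each $j\in[n/\ell-1]$ the decoder inspects the three-bit window $(z_1,z_2,z_3):=(y_{\alpha_j+\ell-2},y_{\alpha_j+\ell-1},y_{\alpha_j+\ell})$, which lies entirely inside the $\ell+1$-bit look-ahead permitted by Definition~\ref{def3} for $\tau=1$, declares an outcome $(\del_j,\ins_j)\in\{(0,0),(1,0),(0,1)\}$ from a fixed table indexed by this triple, and updates $\alpha_{j+1}=\alpha_j+\ell+\ins_j-\del_j$. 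The final block is handled from the number of bits of $\mathbf{y}$ that remain past $\alpha_{n/\ell}$.

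To populate the lookup table and verify correctness I would exploit the boundary pattern $\ldots 011\,|\,000\ldots$ enforced by Construction~\ref{consmix} and record the following structural invariants for each event in $\mathbf{x}^j$. If $\mathbf{x}^j$ has no error then $z_1 z_2 = 11$, because these positions carry the fixed suffix $x_{\ell-1}^j x_\ell^j=11$. If $\mathbf{x}^j$ suffers one deletion then its shortened last bit is either the preserved $x_\ell^j$ or, if $x_\ell^j$ was removed, the preserved $x_{\ell-1}^j$, both of which equal $1$, forcing $z_1=1$, while $z_2$ now reads the leftmost bit of the shifted $\mathbf{x}^{j+1}$-portion that begins with the run $000$. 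If $\mathbf{x}^j$ suffers one insertion then the insertion-model convention of Section~\ref{insdelmodel} keeps $x_\ell^j$ as the last bit of the lengthened block, so $z_3=1$. From these invariants, triples such as $(1,0,0)$ and $(0,1,1)$ already identify a unique event.

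The principal obstacle is the small family of triples compatible with more than one event, namely $(1,0,1)$, $(1,1,0)$, and $(1,1,1)$. For each such triple I would enumerate every deletion/insertion pattern on $(\mathbf{x}^j,\mathbf{x}^{j+1})$ that realizes it under the at-most-one-error-per-block hypothesis, and verify that each alternative is either the transposition-at-boundary ambiguity discussed in Section~\ref{insdelmodel} or a closely related deletion-plus-insertion pair whose net effect on $\mathbf{y}$ coincides with that of a lighter single-block event. The decoder's tie-breaking rule is then fixed to agree with the leftmost-insertion and interpretation-(ii) transposition conventions inherited in Section~\ref{insdelmodel}. After checking that the $\alpha_{j+1}$ induced by the decoder's declared output keeps subsequent blocks aligned so that the inductive invariant is preserved, the induction step closes and the theorem follows.
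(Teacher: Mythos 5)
Your redundancy count, complexity claim, and overall plan (a block-by-block decoder that reads a constant-size window at each boundary and falls back on a convention for genuinely ambiguous patterns) are the same as the paper's; note that the paper itself never writes out a correctness proof for this theorem, it only specifies the decision tree of Figure~\ref{fig:3mix} and the conventions of Section~\ref{insdelmodel}, so the case enumeration you defer to the end is precisely where all the content lies. The gap is that your three-bit window $(y_{\alpha_j+\ell-2},y_{\alpha_j+\ell-1},y_{\alpha_j+\ell})$ --- the home positions of $x^j_{\ell-1}$, $x^j_{\ell}$ and $x^{j+1}_1$ --- provably does not determine $(\del_j,\ins_j)$, and the collisions are not of the benign kind your last paragraph anticipates. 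Concretely, compare: (a) deleting $x^j_{\ell}$ from $\x^j$ while a $1$ is inserted between $x^{j+1}_1$ and $x^{j+1}_2$ in $\x^{j+1}$, turning the boundary $\cdots 011\,|\,000\cdots$ into $\cdots 01\,|\,0100\cdots$; and (b) inserting a $0$ between $x^j_{\ell-1}$ and $x^j_{\ell}$ with $\x^{j+1}$ intact, turning it into $\cdots 0101\,|\,000\cdots$. Both respect the one-error-per-block model, neither is a boundary transposition nor reducible by the leftmost-insertion convention to a lighter event, and they demand different outputs ($\del_j=1$ versus $\ins_j=1$, with the induced $\alpha_{j+1}$ differing by $2$, so a wrong choice derails all later blocks). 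Yet both yield the triple $(1,0,1)$; in fact, once the free bit $x^{j+1}_4$ is set to $0$ they agree on every position of $\y$ from $\alpha_j$ through $\alpha_j+\ell+2$ and beyond, far past your window. A second collision of the same flavour: ``no error in $\x^j$'' and ``an early deletion in $\x^j$ plus a $1$ inserted at the front of $\x^{j+1}$'' both produce $(1,1,0)$ on your window but require outputs $(0,0)$ and $(1,0)$ respectively.

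The repair --- and the reason Construction~\ref{consmix} freezes three bits on each side of the boundary rather than two --- is to widen the window in both directions: the home position of $x^j_{\ell-2}$ (one slot to the left of your $z_1$) is what separates ``no error'' from the deletion-plus-front-insertion pattern, and positions inside the $\mathbf{0}^3$ prefix of $\x^{j+1}$ (the home of $x^{j+1}_2$ and further) are needed against patterns like (a). This is what Figure~\ref{fig:3mix} does by testing $y_{\alpha_j+\ell+1}$ first and then $y_{\alpha_j+\ell}$ and $y_{\alpha_j+\ell-2}$ (its indices are shifted by one relative to the convention $\alpha_1=1$ used for the deletion decoder, which you would need to normalize). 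Even then, pair (a)/(b) above is not visibly resolved by any window confined to $\y_{[\alpha_j,\alpha_j+\ell]}$ as Definition~\ref{def3} would require, so a complete proof must either enlarge the lookahead and redo the enumeration around the four-position window, or argue carefully why such patterns are excluded; your proposal as written does neither.
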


{\em Decoding:} Consider a codeword $\mathbf{x}\in \cC_{1}(\ell,n)$ (with \mbox{$6< \ell \leq n/2$}) that is affected by at most $2$ insertions in each block, resulting in $\mathbf{y}\in \mathbb{F}_2^*$. The input of the decoder is $\mathbf{y}$, and the output is $$(\del_1,\del_2,\ldots,\del_{n/\ell}) \times (\ins_1,\ins_2,\ldots,\ins_{n/\ell}) \in \mathbb{Z}_{2}^{n/\ell} \times \mathbb{Z}_{2}^{n/\ell},$$  where $\del_j$ and $\ins_j$ denote the number of deletions and insertions that have occurred in block $j$, respectively. The decoding is done on a block-by-block basis. The process through which the decoder determines its output $(\del_j,\ins_j)$ for a block $j\in [n/\ell-1]$ is illustrated in Figure~\ref{fig:3mix}. As for the last block, the decoder determines $(\del_{n/\ell},\ins_{n/\ell})$ based on the starting position of the last block $\alpha_{n/\ell}$ and the number of bits in $\mathbf{y}$ that are yet to be decoded.

\begin{figure}
    \centering
    \resizebox{0.4\textwidth}{!}{
    \definecolor{color0}{rgb}{0.12156862745098,0.466666666666667,0.705882352941177}
\definecolor{color1}{rgb}{1,0.498039215686275,0.0549019607843137}
\definecolor{color2}{rgb}{0.172549019607843,0.627450980392157,0.172549019607843}
\definecolor{color3}{rgb}{0.83921568627451,0.152941176470588,0.156862745098039}
\definecolor{color4}{rgb}{0.580392156862745,0.403921568627451,0.741176470588235}
\definecolor{color6}{rgb}{0.549019607843137,0.337254901960784,0.294117647058824}
\definecolor{color5}{rgb}{0.890196078431372,0.466666666666667,0.76078431372549}

\tikzset{%
  >={Latex[width=2mm,length=2mm]},
            base/.style = {rectangle, rounded corners, draw=black,
                           minimum width=4cm, minimum height=1cm,
                           text centered, font=\Huge},
  activityStarts/.style = {base, fill=color0!30},
    activityRuns/.style = {base, fill=color1!30},
         process/.style = {base, minimum width=2.5cm, fill=color3!30,
                           font=\Huge},
         lemma/.style = { minimum width=2.5cm, ,
                           font=\huge},
}

\begin{tikzpicture}[node distance=5.5cm, 
    every node/.style={fill=white, font=\Large }, align=center]
  \node (start)             [activityStarts]              {$y_{\alpha_j+\ell+1}=1$??};
  \node(a1) [process, below left = 2 cm and 0cm of start] {$(\del_j,\ins_j)=(0,1)$};
  
  \node(a2) [activityStarts, below right = 2 cm and 0cm of start] {$y_{\alpha_j+\ell}=1$??};
  
  \node(a3) [process, below right = 2 cm and -1cm of a2] {$(\del_j,\ins_j)=(1,0)$};
  
   \node(a4) [activityStarts, below left = 2 cm and -1cm of a2] {$y_{\alpha_j+\ell-2}=0$??};
   
    \node(a5) [process, below left = 2 cm and -2 cm of a4] {$(\del_j,\ins_j)=(0,0)$};
     
   \node(a6) [process, below right = 2 cm and -2 cm of a4] {$(\del_j,\ins_j)=(1,0)$};
   
   \path[->]
   (start)   edge node[font=\Huge] {yes} (a1)
   (start)   edge node[font=\Huge] {no} (a2)
   (a2)   edge node[font=\Huge] {yes} (a4)
   (a2)   edge node[font=\Huge] {no} (a3)
   (a4)   edge node[font=\Huge] {yes} (a5)
   (a4)   edge node[font=\Huge] {no} (a6)
   ;


\end{tikzpicture}
    }
    \caption{Decoding algorithm of $\mathcal{C}_1(\ell,n)$ that outputs the number of deletions and insertions in block $j\in [n/\ell-1]$ denoted by~$(\del_j,\ins_j)$, based on the input string $\y$. After decoding block $j$, the decoder sets the starting position of block $j+1$ in $\y$ to $\alpha_{j+1}=\alpha_j+\ell+\ins_j-\del_j$, where $\alpha_1=1$, and proceeds as depicted in the figure.}
    \label{fig:3mix}
\end{figure}

\else
\fi
\fi

\section{Conclusion}\label{sec:conc}
\iflong
In this paper, we studied the problem of constructing codes that detect the exact number of worst-case deletions and insertions in concatenated strings. First, we constructed codes that detect up to $\del$ deletions in each concatenated block. We derived fundamental limits for this problem which show that our codes are optimal among all block-by-block decodable codes, and asymptotically optimal in $\delta$ among all codes that detect up to $\del$ deletions. Then, we constructed two codes that detect up to $1$ and up to $2$ insertions per block. We also present a construction of a code that detects up to $1$ error per block, where the error could be either a deletion or an insertion. Some of the open problems include finding code constructions and fundamental limits for detecting up to $\ins>2$ deletions and up to $\tau>1$ errors. An additional interesting direction for future research is to consider non-binary codes for detecting deletions and insertions.
\else
In this paper, we studied the problem of constructing codes that detect the exact number of worst-case deletions and insertions in concatenated strings. First, we constructed codes that detect up to $\del$ deletions in each concatenated block. We derived fundamental limits for this problem which show that our codes are optimal among all block-by-block decodable codes, and asymptotically optimal in $\delta$ among all codes that detect up to $\del$ deletions. Then, we constructed two codes that detect up to $1$ and up to $2$ insertions per block. In the extended version of this paper, we also present a construction of a code that detects up to $1$ error per block, where the error could be either a deletion or an insertion. Some of the open problems include finding code constructions and fundamental limits for detecting up to $\ins>2$ deletions and up to $\tau>1$ errors. An additional interesting direction for future research is to consider non-binary codes for detecting deletions and insertions.
\fi

\bibliographystyle{IEEEtran}
\bibliography{Refs}

\balance

\end{document}